\title{Constrained Submodular Maximization via Greedy Local Search}
\titlerunning{Constrained Submodular Maximization}
\author[1]{Kanthi K. Sarpatwar}
\author[1]{Baruch Schieber}
\author[2]{Hadas Shachnai}
\affil[1]{IBM T. J. Watson Research Center, Yorktown Heights, New York, USA\\
  \texttt{\{sarpatwa, sbar\}@us.ibm.com}}
\affil[2]{Computer Science Department, Technion, Haifa, Israel\\
  \texttt{hadas@cs.technion.ac.il}}
\authorrunning{KK Sarpatwar, B Schieber, H Shachnai} 
\subjclass{F.2.2, G.1.6}
\keywords{Submodular Maximization, Knapsack and Matroid Constraints, Greedy Algorithms, Local Search, Approximation Algorithms}
\def\eod{\vrule height 6pt width 5pt depth 0pt}
\newcommand{\negA}{\vspace{-0.05in}}
\newcommand{\negB}{\vspace{-0.1in}}
\newcommand{\mysubsection}[1]{\negB\subsection{#1}\negA}
\newcommand{\myparagraph}[1]{\par\smallskip\par\noindent{\bf{}#1:~}}
\newcommand{\be}{\begin{equation}}
\newcommand{\ee}{\end{equation}}
\def \R{I \!\! R}
\newcommand{\by}{{\bar y}}
\def \1{1 \!\! 1}
\newcommand{\cM}{{\cal M}}
\newcommand{\eps}{\varepsilon}
\newcommand{\amatk}{1-{\sc MatKnap}}
\newcommand{\amatik}{$k$-{\sc MatKnap}}
\newcommand{\addls}{{\tt {added}}}
\newcommand{\remove}[1]{}
\newcommand{\comment}[1] {}
\newtheorem{claim}[theorem]{Claim}
\newcommand{\cm}[1]{}
\newcommand{\cF}{{\cal F}}
\newcommand{\sopt}{{S^*}}
\newcommand{\bts}{{{\tilde B}^*}}
\newcommand{\stt}{{S^{t}}}
\newcommand{\stts}{{S^{t^*}}}
\newcommand{\rtto}{{\rho_{t+1}}}
\newcommand{\argmax}{\mbox{argmax}}
\def \ee   {\varepsilon}
\newlength{\tablength}
\newlength{\spacelength}
\newcommand{\tabstar}{\hspace*{\tablength}}
\newcommand{\spacestar}{\hspace*{\spacelength}}
\def\obeytabs{\catcode`\^^I=\active}
{\obeytabs\global\let^^I=\tabstar}
{\obeyspaces\global\let =\spacestar}
\newenvironment{display}{\begingroup\obeylines\obeyspaces\obeytabs}{\endgroup}
\newenvironment{prog}{\begin{display}\parskip0pt\sf}{\end{display}}
\def\obeytabs{\catcode`\^^I=\active}
{\obeytabs\global\let^^I=\tabstar}
{\obeyspaces\global\let =\spacestar}
\begin{document}

\maketitle

\begin{abstract}
We present a 
simple combinatorial $\frac{1 -e^{-2}}{2}$-approximation algorithm 
for maximizing a monotone submodular function subject to a knapsack and a matroid
constraint.
 This classic problem is known to be hard to approximate within factor better than $1 - 1/e$.
We show that the algorithm can be extended to yield a ratio of $\frac{1 - e^{-(k+1)}}{k+1}$ for the problem with a single knapsack and the intersection of $k$ matroid 
constraints, for any fixed  $k > 1$.

Our algorithms, which combine the greedy algorithm of  [Khuller, Moss and Naor, 1999] and [Sviridenko, 2004] with local search, 
show the power of this natural framework in submodular maximization with 
combined constraints.
\end{abstract}

\section{Introduction}
A set function $f\!\!:\! 2^U \rightarrow \R$ is submodular if for every $R,T \subseteq U$, $f(R) + f(T) \geq f(R \cup T) + f(R \cap T)$.
Such functions are ubiquitous in diverse fields,  
most notably in combinatorial optimization, operations research and  economics.
An equivalent definition of submodularity, which is perhaps more intuitive, refers to its diminishing returns: 
$f(R \cup \{ u \}) - f(R) \leq f(T \cup \{ u \} )- f(T)$, for any $T \subseteq R \subseteq U$, and $u \in U \setminus R$.
The concept of diminishing returns is widely used in economics, often leading to submodular utility functions.
Submodular functions also provide a unifying framework which captures optimization problems that have been studied earlier, such as Min Cut, Max Cut, Multiway Cut, Maximum Coverage, 
the Generalized Assignment Problem,  or the Separable Assignment Problem.

In these settings, the goal is to optimize a submodular function subject to certain constraints. In fact, in many cases the constraints at hand are quite simple, such as
knapsack constraints, matroid constraints, or a combination of the two.
Such submodular maximization problems naturally arise in advertising campaigns~\cite{AF+16}, combinatorial auctions~\cite{DS06,Vo08},
social networks~\cite{KKT03,KKT05},
and document/corpus summarization~\cite{SSSJ12,KB14}.

We consider first  the classic problem of maximizing a monotone submodular function subject to a knapsack and a matroid constraint. Formally, let $f$ be a non-negative monotone 
submodular function of a ground set $U$. Throughout the paper, we assume that $f$ is given via a value oracle; that is, given a set $S \subseteq U$, the oracle returns $f(S)$. 
Let ${\cF} \subseteq 2^U$ be a family of subsets of $U$, and ${\cM}= (U, {\cF})$ a matroid defined over $U$ and 
${\cF}$ (see Section \ref{sec:single_matroid} for the formal definition).

The subsets in the collection ${\cF}$ are called {\em independent sets}. 
Suppose that each element $u \in U$ has a nonnegative
size $c_u$, and let $B$ be a
given size bound.
Our goal is to maximize $f(S)$ over all
subsets $S\subseteq U$, such that $S$ is an independent set of ${\cM}$, and
the total size of the elements in $S$ is bounded by $B$, i.e.,
\begin{equation}
\label{eq:problem_def}
\max_{S \subseteq U} \{ f(S):~ S \in {\cal F}  \mbox{ and } \sum_{u \in S} c_u \leq B \}
\end{equation}

We further consider a generalization of (\ref{eq:problem_def}) in which we are given $k$ collections of independent sets, ${\cF}_1, \ldots, {\cF}_k$, each containing subsets
of $U$, and the corresponding matroids ${\cM}_j = (U, {\cF}_j)$, $j=1, \ldots , k$. The goal is to maximize $f(S)$ over all subsets $S$ of $U$ which are
independent sets in $\bigcap_{j=1}^k  {\cF}_j$. Formally, the optimization problem is 

\begin{equation}
\label{eq:k_matroid_problem}
\max_{S \subseteq U} \{ f(S):~ S \in \bigcap_{j=1}^k  {\cF}_j  \mbox{ and } \sum_{u \in S} c_u \leq B \}
\end{equation}

There is a beautiful line of research in this area. 
For a long time, the only known work 
had been a sequence of papers by Fisher, Nemhauser and Wolsey~\cite{NWF78,FNW78,nw78}, showing that a greedy algorithm 
achieves a ratio of $1 - 1/e$ to the optimum for maximizing a monotone submodular function
under a cardinality constraint,\footnote{The special case of a knapsack constraint where $c_u=1$ for all $u \in U$.} with a matching hardness of approximation result in the oracle model. The paper~\cite{FNW78} shows that simple local search 
yields a ratio of
$1/2$ when the function is maximized under a matroid constraint. A greedy algorithm was shown to achieve a ratio of $\frac{1}{k+1}$ for this problem with $k$ matroid constraints.

The hardness of approximation within ratio better than $1 -1/e$, already for Maximum Coverage
(i.e., maximizing a {\em linear} function under cardinality constraint),
follows from a result of Feige~\cite{f98}. This has led to an ongoing research aiming to develop approximation
algorithms which come close to this bound. The seminal paper of Khuller, Moss and Naor~\cite{kmn99} achieved the ratio of $1 - 1/e$ for Budgeted Maximum Coverage, using a greedy algorithm.
Their result inspired the later work of Sviridenko~\cite{s04}, presenting a simple greedy algorithm for maximizing a monotone submodular function subject to a knapsack constraint.
For a single matroid constraint, there are several algorithms achieving the ratio  $1-1/e$ via continuous greedy and multilinear relaxations~\cite{Vo08,ccpv10}
as well as other advanced techniques~\cite{FW12,BV14}.

For all of these cases the best known approximation ratios are essentially the best possible, with many of the algorithms easy to employ in practical scenarios; yet, already for maximizing a monotone submodular function subject to a knapsack {\em and}
 a matroid constraint, attempting to approach the ratio $1-1/e$ within additive of $\eps$, for any fixed $\eps > 0$, requires existing algorithms to perform $\Omega(n^{\mathrm{poly}(1/\eps)})$ steps~\cite{CVZ10} (see also~\cite{GNR16}).
 The dependence on $1/\eps$  in the exponent renders these algorithms impractical 
 even for $\eps=1/2$. The next best ratio of $0.38$, obtained via
 multilinear relaxations and contention resolution schemes~\cite{CVZ14},
deviates from the known hardness of approximation bound of $1 - 1/e$ almost by factor of 
$2$.\footnote{The bound of $1- 1/e$ follows from the known inapproximability results for each constraint alone~\cite{f98,FHK17}. }
A fast algorithm of~\cite{BV14} achieves the ratio $0.25$.
 Similarly, for a single knapsack constraint {\em and} $k$ matroid constraints, where $2 \leq k \leq 5$, the best known ratio is
$h(k,\eps) =\max_{0 \leq b \leq 1} (1 - e^{-b} - \epsilon)[(1 - e^{-b})/b]^k$, due to ~\cite{FNS11a} (see also~\cite{F13} and Table~\ref{table:compare_mult_matroid}). 
 For fixed $k >5$, the best ratio is $\frac{1}{k+3+\eps}$ 
 derived in~\cite{BV14}.\footnote{The algorithm of~\cite{BV14} achieves these bounds in almost {\em linear} time, for a class of more general instances.}
 The best known hardness of approximation result is  $O(\frac{\log k}{k})$, due to~\cite{HSS06}, for the intersection of $k > 1$ matroid constraints.
 
In this work, we amend the above state of affairs 
by obtaining practical algorithms whose approximation ratios are closer to the known hardness of approximation results.
We note that some of the 
previous algorithms have better running times, while the results of~\cite{CVZ10,GNR16}
lead to better bounds  (albeit at high computational cost). Yet, as detailed below, 
a notable advantage of our algorithms is in being extremely simple.

\comment{
For combined (knapsack and matroid) constraints, there are known approximation algorithms that rely on converting the knapsack constraints to partition matroid constraints (see, e.g., \cite{GNR16}), leading to
$\Omega(n^{\mathrm{poly}(1/\eps)})$ running times. These running times become highly impractical when attempting to approach our bounds. In particular, in order to obtain our bound for a single 
knapsack and a single matroid constraint we need to  run an approximation algorithm for maximizing a monotone submodular function subject to two matroid constraints on $\Omega(n^{221})$ problem instances (see \cite{GNR16}). 

Chekuri et al.~\cite{CVZ10} showed that the problem of maximizing a submodular function, subject to $O(1)$ knapsack constraints and a single matroid constraint, can be approximated within factor
$1-1/e -\eps$, for any fixed $\eps >0$. The approximation technique of~\cite{CVZ10} relies on the
algorithm of~\cite{KST09}, which starts by guessing the $\lceil d \eps^{-4} \rceil$ large elements in the solution, where d is the number of knapsack
constraints. The algorithm then solves (using a continuous extension) a smaller instance, which has no large elements, and applies to the (fractional) solution randomized rounding.
This implies that the approximation algorithm of~\cite{CVZ10} requires to solve $\Omega(n^{1/(\eps^4)}$ problem instances, where $n$ is the size of the ground set.
To obtain a ratio better than $(1-e^{-2})/2 \approx 0.432$ for a single matroid constraint, in the above
algorithm one should take $\eps < 0.202$. The resulting running time is $\Omega(n^{600})$. 
}
While we focus in this paper on maximizing a monotone submodular function under a single knapsack and (single or multiple) matroid constraints, there are known results for extension to 
intersection of $k$-system and $\ell$ knapsack constraints~(e.g., \cite{GNR10,GR+10a,GRST10,BV14,CVZ14}), and for maximizing a 
non-monotone submodular 
function~\cite{FMV07,LSV10,LMNS10a,GRST10,FNS11,FNS11a,KST13,CVZ14,FHK17}. 
For unconstrained submodular maximization, the best known result is
a randomized linear time $1/2$-approximation algorithm by
Buchbinder et al.~\cite{BFNS15}.

\myparagraph{Our Contribution}
We present combinatorial algorithms for maximizing a monotone submodular function subject to a knapsack and $k$ matroid constraints, yielding
approximation ratio of $\frac{1-e^{-(k+1)}}{k+1}$, for any fixed $k \geq 1$. 
The algorithms are based on the greedy approach of~\cite{kmn99,s04}
combined with local improvement steps which allow $k$-swaps (see, e.g., \cite{LMNS10a}).  These easy
features make the algorithms 
highly intuitive in tackling our problems.
For the single matroid case, we show that 
our algorithm can be implemented in ${\tilde{O}}(n^6)$ time, where $n=|U|$ is the cardinality of the ground set.\footnote{The notation $\tilde{O}$ ``hides'' poly-logarithmic factors.} 
We note that the greedy algorithm of Sviridenko~\cite{s04}, that yields the
best possible ratio of $1-1/e$ for maximizing a submodular function subject to a knapsack constraint, 
has running time $O(n^5)$.
For a single knapsack and $k$ matroid constraints, where $k >1$ is fixed, the running time of our algorithm is 
$n^{O(k)}$. Table~\ref{table:known_results_single_matroid} gives the known results for the special case where $k=1$, along
with the running times of the algorithms.
 Table~\ref{table:compare_mult_matroid} compares our approximation ratios with best previous results 
for several values of $k >1$.

Our algorithms handle the combined knapsack and matroid constraints by applying {\em greedy swaps}. While the greedy property
guarantees that we do not exceed the knapsack constraint without collecting enough value,  the swaps maintain the independence of the solution set 
given the matroid constraint. 
Our algorithms for single and multiple matroid constraints (outlined in Alg.~\ref{alg:submod_km} and Alg.~\ref{alg:submod_kmi}, respectively) proceed in the same fashion. 
The only difference is that in the multiple matroid case, we replace the `swaps' with `$k$-swaps'; a $k$-swap
may involve the deletion of up to $k$ elements from the 
solution set, while adding to the set a single element.
To the best of our knowledge, this notion of {\em greedy swaps} is used here for the first time.


\begin{table}[h!]
\centering
\parbox{0.4\textwidth}{
\begin{footnotesize}
 \begin{tabular}{|l |l|} 
 \hline
 {\bf Approximation ratio}&{\bf Running time} \\ [1ex] 
 \hline
$1-1/e -\eps$ & $\Omega(n^{\frac{1}{\eps^4}})$~\cite{CVZ10} \\
 \hline
$\frac{1 - e^{-2}}{2 } \approx {\bf 0.432}$&  ${\tilde{O}}(n^6)$~{\bf This paper}\\ 
 \hline
$0.38$ & $poly(n)$~\cite{CVZ14}\\ 
\hline
$0.25$& $O(\frac{n}{\eps^2} \log^2 \frac{n}{\eps})$~\cite{BV14}\\ 
\hline
\end{tabular}
\end{footnotesize}
\centering
}
\caption{Known results for maximizing a monotone submodular function subject to a knapsack and a matroid constraint.
The ratio in~\cite{CVZ10} becomes strictly better than the ratio in this paper by taking $\eps < 0.2$.
}
\label{table:known_results_single_matroid}
\end{table}

\begin{table}[h!]
\centering
\parbox{0.4\textwidth}{
\begin{footnotesize}
 \begin{tabular}{|l |l |l|} 
 \hline
 ${\mathbf k}$ & {\bf Our Work}&{\bf Previous Best
 } \\ [0.5ex] 
 \hline
2 &$\frac{1 - e^{-3}}{3 } \approx $ {\bf 0.316}& $h(2,\eps) < 0.262$~\cite{FNS11a}\\ 
 \hline
3& $\frac{1 - e^{-4}}{4 } \approx ${\bf 0.245}& $h(3,\eps) < 0.192$~\cite{FNS11a}\\ 
 \hline
 5 &$\frac{1 - e^{-6}}{6 } \approx $ {\bf 0.166}& $h(5,\eps) < 0.127$~\cite{FNS11a}\\ 
 \hline
8 &$\frac{1 - e^{-9}}{9 } \approx $ {\bf 0.111}& $\frac{1}{11+\eps} < 0.091$ ~\cite{BV14}\\ 
\hline
9 &$\frac{1 - e^{-10}}{10 } \approx $ {\bf 0.1}& $\frac{1}{12+\eps} < 0.083$ ~\cite{BV14}\\ [1ex] 
 \hline
\end{tabular}
\end{footnotesize}
\centering

}
\qquad
\begin{minipage}[c]{0.53\textwidth}%
\centering
    \includegraphics[width=1\textwidth]{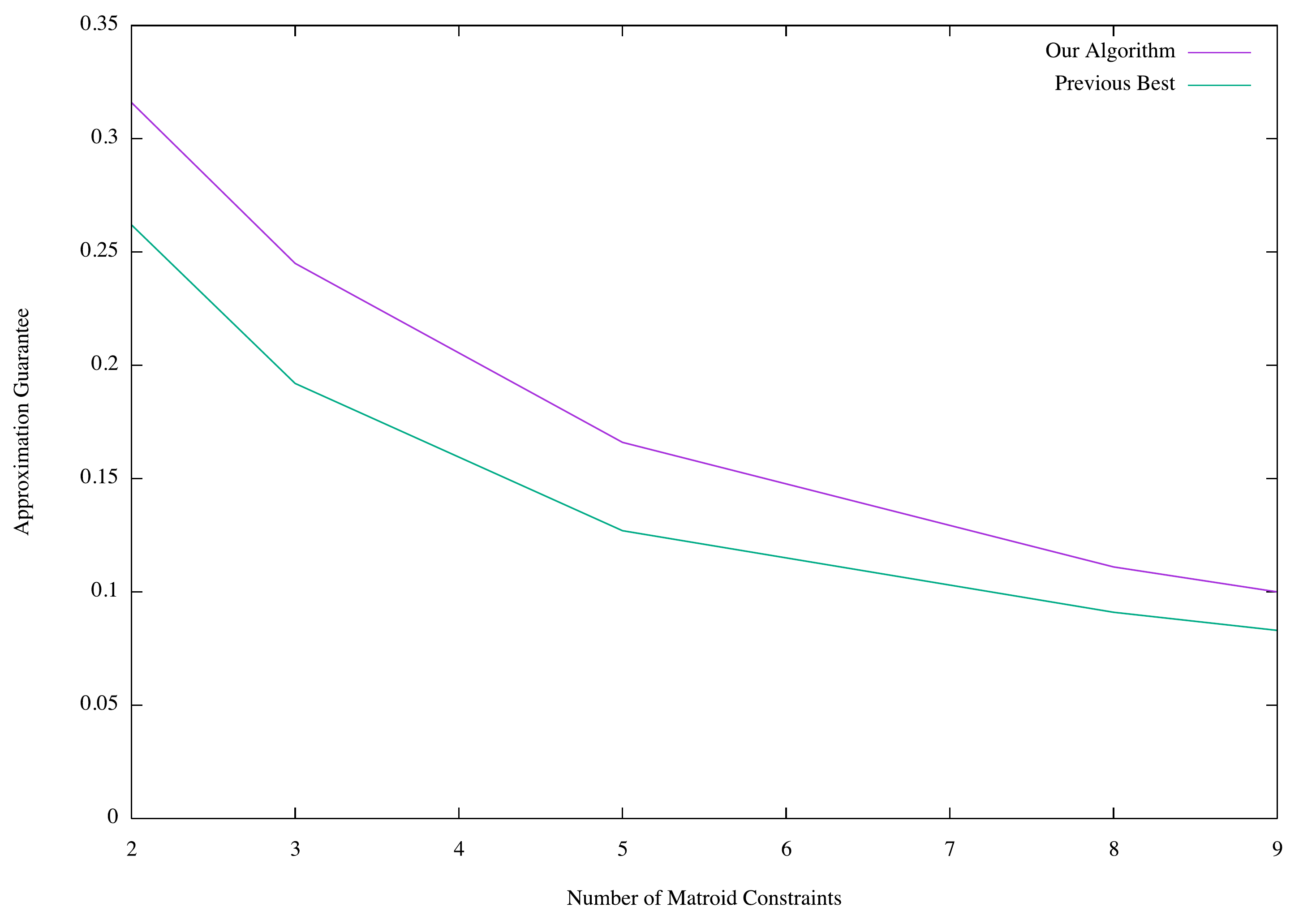}

\label{fig:compare_bounds}

\end{minipage}
\caption{Comparison of ``our work v\/s previous best'' for monotone submodular maximization subject to a single knapsack and $k>1$ matroid constraints.
The entries in the Previous Best column apply to the problem with $O(1)$ knapsack constraints.}
\label{table:compare_mult_matroid}
\end{table}

\section{A Single Matroid Constraint}
\label{sec:single_matroid}
We first study the problem where we have a knapsack constraint and a single matroid constraint (as defined in Eq.~(\ref{eq:problem_def})).

We start with some definitions and notation. For a subset of elements $U' \subseteq U$, let $c(U')$ denote the total size of the elements in $U'$, i.e.,
$c(U')= \sum_{u \in U'} c_u$.

For notational convenience, we consider $\phi$ as a dummy element not in the ground set $U$. We assume that $c_\phi=0$, and for any set $U' \subseteq U$,
$f(U' \cup \{ \phi \}) = f(U')$.
Given a subset $U' \subseteq U$, where $U'$ is independent, i.e., $U'\in {\cF}$,
a pair of elements  $(x,y)$, where  $x \in U \setminus U'$ and  $y \in U' \cup \{ \phi \}$,  is a {\em swap} if $(U' \setminus \{ y \}) \cup \{ x \}$ is independent, i.e.,
$(U' \setminus \{ y \}) \cup \{ x \} \in {\cF}$.
Let $L(U')$ denote the collection of
 swaps that involve elements from $U'$.
The marginal profit density of a swap $(x,y)$ is given by
$\rho_{(x,y)} = \frac{f((U'\setminus\{ y \}) \cup \{ x \}) - f(U')}{c_x}$.

\subsection{The Algorithm}
\label{sec:alg_mat_k}

Let $\sopt = \{u_1,u_2,\ldots, u_p\}$ be an optimal solution of the
given instance, ordered such that $u_{i+1}$ is the element with maximum marginal profit with
respect to the prefix $\{u_1,u_2,\ldots, u_{i}\}$.
We assume that $|\sopt| \geq 2$, since in case $\sopt$ consists of a single element, this element can be guessed to yield an optimal solution.
The algorithm starts by guessing $Y= \{ u_1, u_2 \}$, the two elements
with largest marginal profits in $\sopt$.

We set the initial solution to consist of  the set $Y$. The algorithm then  applies iterations of `greedy swaps'
that expand the solution while maintaining the knapsack and the matroid constraint.

We note that for ease of understanding the pseudocode of {\amatk} (in Algorithm~\ref{alg:submod_km}) does not
guarantee that the algorithm terminates in polynomial time. However, in Section~\ref{sec:runtime_single_matroid} we discuss a modified version of our algorithm that runs in polynomial time with
no harm to the approximation ratio.
\begin{algorithm}[!h]
\caption{{\amatk}$(U, B, {\cM})$ }
    \label{alg:submod_km}
  \begin{algorithmic}[1]
    \State Guess $u_1 = \argmax_{u\in\sopt} f(\{u\})$ and $u_2 = \argmax_{u\in\sopt\setminus{\{u_1\}}} f(\{u, u_1\}) - f(\{u_1\})$
    \State Let $Y = \{ u_1, u_2 \}$
    \State Initialize $S= Y$ and ${\addls} = true$
    \While{${\addls}$}
    \label{alg1:start_iteration}
    \Statex \Comment{greedy swaps}
        \State $\addls = false$
        \State Generate the collection of swaps $L=L(S)$
       \While{($not({\addls})$ and $L \neq \emptyset$)} \label{step:greedy-swap}
     \State Pop (i.e., pick and remove) from $L$ a swap $(x,y)$ with a maximum value of $\rho_{(x,y)}$
                \If{ ($y\notin Y$ and $\rho_{(x,y)} >0$ and  $c_x  - c_y + c(S) \leq B$)}
                \State $S= (S  \setminus \{y \}) \cup \{x\}$
                 \State $c(S)= c(S)  - c_y + c_x$
                 \State  ${\addls}= true$
         \EndIf
     \EndWhile
    \EndWhile
\State      \Return S
       \end{algorithmic}
\end{algorithm}

\subsection{Analysis}
\label{sec:analysis_single_matroid}
We now show that Algorithm {\amatk} yields an approximation ratio that is close to the optimal of $1-e^{-1}$ known for submodular maximization with
either a knapsack {\em or} a matroid constraint. Formally, our main result is the following.
\begin{theorem}
\label{thm:amatk_appx_ratio}
{\amatk} is a $\frac{1-e^{-2}}{2}$-approximation algorithm for
submodular maximization subject to a knapsack and a matroid constraint.
\end{theorem}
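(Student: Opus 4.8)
The plan is to run the Khuller--Moss--Naor/Sviridenko style of analysis for greedy with partial enumeration, but with additions replaced by the swap steps of {\amatk} and with matroid exchange supplying the candidate moves. First fix the run in which $Y=\{u_1,u_2\}$ is guessed correctly; the facts I would record about the guess are that $f(\{u_1\})\ge f(\{u\})$ for every $u\in\sopt$ and $f(Y)-f(\{u_1\})\ge f(\{u_1,u\})-f(\{u_1\})$ for every $u\in\sopt\setminus\{u_1\}$, whence by submodularity any $u\in\sopt\setminus T$ added to an independent set $T\supseteq\{u_1\}$ has marginal $f(T\cup\{u\})-f(T)\le f(Y)-f(\{u_1\})$. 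Second, I would invoke the standard matroid exchange property in the form: for independent sets $A,B$ there is a map $\pi\colon B\setminus A\to(A\setminus B)\cup\{\phi\}$ that sends at most one element of $B\setminus A$ to each element of $A\setminus B$, with $(A\setminus\{\pi(b)\})\cup\{b\}\in\cF$ for every $b\in B\setminus A$. Since $Y\subseteq A\cap B$ whenever $A$ is a solution of {\amatk} and $B=\sopt$, each swap $(b,\pi(b))$ automatically has $\pi(b)\notin Y$, so the ``$y\notin Y$'' clause of the algorithm is harmless in the analysis.

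The core is a one-step progress lemma. Write $S_0=Y,S_1,\dots,S_\tau=S$ for the successive solutions, $S_i=(S_{i-1}\setminus\{y_i\})\cup\{x_i\}$; since every performed swap has $\rho>0$, $f$ is strictly increasing along this sequence, so I may assume $\Phi_i:=f(\sopt)-2f(S_i)>0$ throughout (else $f(S_i)\ge f(\sopt)/2$ and we are done). Applying $\pi$ with $A=S_{i-1},B=\sopt$, combining the submodular inequality $f((T\setminus\{z\})\cup\{u\})-f(T)\ge[f(T\cup\{u\})-f(T)]-[f(T)-f(T\setminus\{z\})]$ over all $u\in\sopt\setminus S_{i-1}$ with the at-most-one-preimage property of $\pi$ and with $\sum_{u\in\sopt}c_u\le B$, yields some exchange swap of density at least $\Phi_{i-1}/B$. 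At any step where this swap is also feasible for the knapsack (guaranteed while $c(S_{i-1})\le B-\max_{v\in\sopt}c_v$), the greedy rule gives $f(S_i)-f(S_{i-1})=c_{x_i}\rho_{(x_i,y_i)}\ge(c_{x_i}/B)\Phi_{i-1}$, i.e.
\[
\Phi_i\le\Bigl(1-\tfrac{2c_{x_i}}{B}\Bigr)\Phi_{i-1},
\]
and positivity of the $\Phi$'s forces $c_{x_i}<B/2$ at such steps. Telescoping this over a prefix of steps that together consume essentially the whole budget $B$, and using $\prod_i(1-a_i)\le e^{-\sum_i a_i}$ together with $\Phi_0\le f(\sopt)$, gives $\Phi\le e^{-2}f(\sopt)$, i.e. $f(S)\ge\tfrac{1-e^{-2}}{2}f(\sopt)$. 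Note the exponent $2$ is precisely the factor $2$ that the matroid bookkeeping inserted in $\Phi_i=f(\sopt)-2f(S_i)$, and the same $2$ reappears as the denominator of the ratio.

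It remains to show that termination always lets us invoke this. If {\amatk} halts with every exchange swap $(u,\pi(u))$, $u\in\sopt\setminus S$, having $\rho\le0$ (in particular if $\sopt\subseteq S$), then every numerator in the progress lemma is $\le0$, so $f(\sopt)-2f(S)\le0$ and we are done. Otherwise some such swap has positive density yet is blocked, hence blocked by the knapsack constraint: $c_v-c_{\pi(v)}+c(S)>B$, which pins $c(S)$ to within $\max_{v\in\sopt}c_v$ of $B$. In this case I would run the recursion up to the last step $t$ at which the chosen exchange swap was still budget-feasible, then ``virtually'' perform the stopping swap (an exchange swap $(w,z)$ with $w\in\sopt$, density $\ge\Phi_t/B$, and $c_w-c_z+c(S_t)>B$) and, if necessary, a few further virtual exchange swaps of $\sopt$ chosen greedily by density, so that the real-plus-virtual budget consumed reaches $B$; here I use the identity $\sum_{i\le t}c_{x_i}=c(S_t)-c(Y)+\sum_{i\le t}c_{y_i}$ (every removed element was inserted earlier and $Y$ is never removed) to control the real part. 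Finally I transfer the bound on the virtual (infeasible) set $\hat S$ back to $S$ via $f(S)\ge f(S_t)\ge f(\hat S)-(\text{marginals accumulated over the virtual steps})$, each such marginal being at most $f(Y)-f(\{u_1\})$ by the guessing fact.

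The step I expect to be the main obstacle is exactly this last case: reconciling the ``leftover-budget'' loss (the $c(Y)/B$ that appears because the telescoped product runs over a budget $\approx B-c(Y)$ rather than $B$) with the ``overflow'' loss paid when passing from $\hat S$ back to the feasible $S$, and making the two cancel to give precisely $\tfrac{1-e^{-2}}{2}$. Guessing exactly $|Y|=2$ elements (rather than one or three) is calibrated for this balance, and getting the arithmetic tight --- together with the subsidiary nuisances of a possible single element of $\sopt$ of size $>B/2$ and of ordering the virtual exchange swaps so that each retains density $\ge\Phi/B$ --- is where the real work lies; the exchange lemma, the submodular one-step estimate, and the telescoping are all routine.
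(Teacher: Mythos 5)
Your plan matches the paper's strategy point for point: guess $Y=\{u_1,u_2\}$, use the matroid exchange lemma to pair each $x\in\sopt\setminus S$ with at most one $y\in S\setminus\sopt$, split into a ``local-search stuck'' case (giving a clean $1/2$) and a ``knapsack-blocked'' case, run a Wolsey/KMN-style telescoping over the greedy swaps with a virtual final swap, and transfer back to the feasible solution using the $Y$-marginal bound. So the architecture is correct; what you flag as ``getting the arithmetic tight'' is, however, a genuine missing ingredient rather than routine bookkeeping, and with the potential you chose the claimed constant is not reached.

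Concretely, with $\Phi_i=f(\sopt)-2f(S_i)$ and the progress bound $\rho\ge\Phi_{i-1}/B$, telescoping through the virtual step gives $\hat\Phi\le e^{-2}\Phi_0=e^{-2}\bigl(f(\sopt)-2f(Y)\bigr)$, hence $f(\hat S)\ge\frac{1-e^{-2}}{2}f(\sopt)+e^{-2}f(Y)$; subtracting the transfer loss (your $f(Y)-f(\{u_1\})\le f(Y)/2$) leaves $f(S)\ge\frac{1-e^{-2}}{2}f(\sopt)-\bigl(\tfrac12-e^{-2}\bigr)f(Y)$, which is \emph{strictly below} the target because $\tfrac12>e^{-2}$. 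The fix the paper uses is to work with the shifted function $g=f-f(Y)$ (equivalently, to telescope the removal-marginals $f(S^t)-f(S^t\setminus\{b_t(x)\})$ down to $Y$ rather than to $\emptyset$, so the ``Claim'' bound is $\le g(S^t)=f(S^t)-f(Y)$, not $\le f(S^t)$). This sharpens the one-step inequality to $g(\sopt)\le 2g(S^t)+(B-c(Y))\rho_{t+1}$, i.e.\ $\rho_{t+1}\ge(\Phi_t+f(Y))/(B-c(Y))$, and the extra $+f(Y)$ is exactly what makes the final arithmetic close: the bound becomes $f(S)\ge\frac{1-e^{-2}}{2}f(\sopt)+\frac{e^{-2}}{2}f(Y)\ge\frac{1-e^{-2}}{2}f(\sopt)$. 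In short: start the potential at $g(\sopt)=f(\sopt)-f(Y)$ (not $f(\sopt)-2f(Y)$) and use the $g$-version of the removal-sum bound; without that shift your telescope loses a positive multiple of $f(Y)$ that the transfer step cannot recoup.
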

The following simple lemmas will be useful in the proof of Theorem~\ref{thm:amatk_appx_ratio}.
Let ${\cF}$ be a collection of subsets of $U$, and
${\cM}= (U,{\cF})$ a matroid defined on $U$. Recall that ${\cM}$ satisfies the following (see, e.g.,~\cite{CLRS01}).
\begin{enumerate}
\item[$(i)$]
{\em Non-emptiness:} The empty set is in ${\cF}$ (thus, ${\cF}$ is not itself empty).
\item[$(ii)$]
{\em Hereditary} property: If a set $R$ is in ${\cF}$ then every subset of
$R$ is also in ${\cF}$.
\item[$(iii)$]
{\em Exchange} property: If $R$ and $T$ are two sets in ${\cF}$, where $|R| > |T|$, then there is an element
$r \in R \setminus T$ such that $T \cup \{ r \}$ is in ${\cF}$.
\end{enumerate}
We note that all maximal independent sets (or {\em bases}) of a matroid have the same cardinality. The first lemma follows directly from the  above Exchange property.
\begin{lemma}
\label{lemma:matroid_bases}
Let $B_1,B_2$ be two bases of a  matroid ${\cM}=(U, {\cF})$, then for any
$x \in B_2 \setminus B_1$, there exists $y \in B_1 \setminus B_2$ such that
$(B_1 \setminus \{y\}) \cup \{x\}$ is a base of ${\cM}$.
\end{lemma}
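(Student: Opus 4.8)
The plan is to build, using only the Exchange property~(iii), a basis of $\cM$ that contains $x$ and agrees with $B_1$ except at a single element which is \emph{forced} to lie in $B_1 \setminus B_2$. Concretely, I would grow an independent set starting from the seed $A_0 = (B_1 \cap B_2) \cup \{x\}$, repeatedly adding elements of $B_1$ via property~(iii) until it has size $|B_1|$.

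First I would record the bookkeeping facts. Since $x \in B_2 \setminus B_1$ we have $B_1 \neq B_2$, and since all bases of a matroid have the same cardinality, $B_1 \setminus B_2 \neq \emptyset$ and $|B_1 \cap B_2| \leq |B_1| - 1$. The seed $A_0 = (B_1 \cap B_2) \cup \{x\}$ is a subset of $B_2 \in \cF$, hence $A_0 \in \cF$ by the Hereditary property~(ii), and $|A_0| = |B_1 \cap B_2| + 1 \leq |B_1|$ because $x \notin B_1$.

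Next I would iterate: while the current independent set $A_i$ satisfies $|A_i| < |B_1|$, apply property~(iii) with $R = B_1$ and $T = A_i$ to obtain $r_i \in B_1 \setminus A_i$ with $A_{i+1} := A_i \cup \{r_i\} \in \cF$. The crucial observation is that $B_1 \setminus A_0 = B_1 \setminus B_2$ (since $x \notin B_1$ and $A_0 \supseteq B_1 \cap B_2$), so every newly added element $r_i$ lies in $B_1 \setminus B_2$. After exactly $|B_1| - |A_0|$ such steps we reach $A \in \cF$ with $|A| = |B_1|$; as a maximum-cardinality independent set, $A$ is a basis of $\cM$. By construction $A = (B_1 \cap B_2) \cup \{x\} \cup W$ with $W \subseteq B_1 \setminus B_2$ and $|W| = |B_1 \setminus B_2| - 1$, so $(B_1 \setminus B_2) \setminus W$ is a singleton $\{y\}$ with $y \in B_1 \setminus B_2$, and $(B_1 \setminus \{y\}) \cup \{x\} = (B_1 \cap B_2) \cup W \cup \{x\} = A$ is a basis, which is exactly the claim.

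The argument is elementary; the only point that needs care — and the reason for seeding the growth with $(B_1 \cap B_2) \cup \{x\}$ rather than with $\{x\}$ alone — is to guarantee that the single discarded element $y$ ends up in $B_1 \setminus B_2$ and not in $B_1 \cap B_2$, which is precisely the extra requirement the lemma imposes over the raw Exchange property.
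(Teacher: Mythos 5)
Your proof is correct, and it matches the paper's intent: the paper only asserts that the lemma ``follows directly from the Exchange property'' without supplying details, and your argument is a careful filling-in of exactly that route, growing an independent set from the seed $(B_1 \cap B_2) \cup \{x\}$ via repeated applications of property~(iii). The observation that seeding with $B_1 \cap B_2$ (rather than just $\{x\}$) forces the one discarded element to lie in $B_1 \setminus B_2$ is precisely the point that makes the ``directly'' claim go through.
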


\begin{lemma}
\label{lemma: mapping_sopt_s}
Given two independent sets $S,T \in {\cF}$, there exists a mapping
$b: T\setminus S \rightarrow (S\setminus T)\cup\{\phi\}$,
such that $(S \setminus \{b(u )\}) \cup \{ u \} \in {\cF}$,
for all $u \in T\setminus S$,
and $|b^{-1}(y)| \le 1$, for all $y \in S\setminus T$.
\end{lemma}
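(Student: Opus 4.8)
The plan is to derive the lemma from the symmetric basis-exchange property of matroids, after first restricting ${\cM}$ to the ground set $S\cup T$. I would let ${\cM}'$ be this restriction --- the matroid whose independent sets are the members of ${\cF}$ contained in $S\cup T$ --- in which both $S$ and $T$ are independent, and extend $S$ to a basis $B_S$ of ${\cM}'$ and $T$ to a basis $B_T$ of ${\cM}'$. Since all bases of a matroid have equal size, $|B_S|=|B_T|$; write $r=|B_S|$ and $d=|B_S\setminus B_T|=|B_T\setminus B_S|$. From $S\subseteq B_S\subseteq S\cup T$ one gets $B_S\setminus S\subseteq T\setminus S$, and symmetrically $B_T\setminus T\subseteq S\setminus T$; a short check then yields $B_S\setminus B_T\subseteq S\setminus T$ and $B_T\setminus B_S=\{u\in T\setminus S:u\notin B_S\}$. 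For each $u\in(T\setminus S)\cap B_S$ the set $S\cup\{u\}$ lies in $B_S$, hence is independent, so I would set $b(u)=\phi$ for all such $u$. The remaining task is to define $b$ on $B_T\setminus B_S$, injectively into $B_S\setminus B_T\subseteq S\setminus T$, while preserving the exchange property.

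For that I would form the bipartite ``exchange graph'' $G$ with sides $B_T\setminus B_S$ and $B_S\setminus B_T$, joining $u$ to $y$ exactly when $(B_S\setminus\{y\})\cup\{u\}$ is independent; a matching of $G$ saturating $B_T\setminus B_S$ is precisely what is needed, and by Hall's theorem it exists once one shows $|N(X)|\ge|X|$ for every $X\subseteq B_T\setminus B_S$. The key computation is this: fix $X$, put $Z=(B_S\setminus B_T)\setminus N(X)$, and note that for $u\in X$ the set $B_S\cup\{u\}$ is dependent in ${\cM}'$ with fundamental circuit $C(u,B_S)$, where $y\in C(u,B_S)$ iff $(B_S\setminus\{y\})\cup\{u\}$ is independent. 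Since no element of $Z$ is adjacent to any element of $X$, $C(u,B_S)\cap Z=\emptyset$, so $C(u,B_S)\setminus\{u\}\subseteq B_S\setminus Z$ and $u$ lies in the span of $B_S\setminus Z$; hence $\mathrm{rank}\big((B_S\setminus Z)\cup X\big)\le r-|Z|$. On the other hand $(B_S\cap B_T)\cup X$ is an independent subset of $B_T$ of size $(r-d)+|X|$ contained in $(B_S\setminus Z)\cup X$, so that same rank is at least $(r-d)+|X|$. Comparing the two bounds gives $|Z|\le d-|X|$, i.e. $|N(X)|=d-|Z|\ge|X|$. (Alternatively one can quote Brualdi's symmetric exchange theorem --- a bijective strengthening of Lemma~\ref{lemma:matroid_bases} --- which supplies the matching directly.)

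With the matching in hand --- call the resulting injection $\pi_0\colon B_T\setminus B_S\to B_S\setminus B_T$, so that $(B_S\setminus\{\pi_0(u)\})\cup\{u\}\in{\cF}$ for every $u\in B_T\setminus B_S$ --- I would finish by setting $b(u)=\pi_0(u)$ on $B_T\setminus B_S$ and $b(u)=\phi$ on the rest of $T\setminus S$, and checking the two required properties. For the first: if $b(u)=\phi$ then $(S\setminus\{\phi\})\cup\{u\}=S\cup\{u\}\subseteq B_S\in{\cF}$; if $u\in B_T\setminus B_S$ then $(S\setminus\{b(u)\})\cup\{u\}\subseteq(B_S\setminus\{b(u)\})\cup\{u\}\in{\cF}$, and the hereditary property finishes it. For the second: for $y\in S\setminus T$ the preimage $b^{-1}(y)$ can be nonempty only when $y\in B_S\setminus B_T$, and then it has size at most one since $\pi_0$ is injective.

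I expect the crux to be the middle step --- establishing the Hall inequality $|N(X)|\ge|X|$ (equivalently, the symmetric exchange); everything else is routine manipulation with the hereditary property. The one structural point that must be set up correctly in advance is the passage to ${\cM}'={\cM}|_{S\cup T}$: it is exactly this that confines $B_S\setminus B_T$ to $S\setminus T$, so that $b$ has the right codomain, while the ``easy'' elements of $T\setminus S$ (those $u$ with $S\cup\{u\}$ already independent) get absorbed by the dummy element $\phi$.
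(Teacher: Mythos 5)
Your proof is correct, but it proceeds by a genuinely different argument than the paper's. Both proofs begin by extending $T$ and $S$ to bases (though the paper does this in the full matroid $\cM$ rather than in the restriction $\cM|_{S\cup T}$ as you do); the difference is in how the mapping is then constructed. The paper builds $b$ \emph{iteratively}: after extending $T$ to a base $\bar B^*$ and then extending $S$ to a base $\tilde B$ using only elements of $\bar B^*\setminus S$, it walks through the elements $x\in T\setminus\tilde B$ one at a time, each time invoking Lemma~\ref{lemma:matroid_bases} on the current pair of bases $(\tilde B,\bar B^*)$ to find a swap partner $y\in\tilde B\setminus\bar B^*\subseteq S\setminus T$, setting $b(x)=y$, and then updating $\tilde B\leftarrow(\tilde B\setminus\{y\})\cup\{x\}$ before moving on; injectivity of $b$ is free because each $y$ is permanently removed from $\tilde B$. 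You instead produce the whole mapping in one shot, verifying Hall's condition on the bipartite exchange graph between $B_T\setminus B_S$ and $B_S\setminus B_T$ via a fundamental-circuit and rank computation (equivalently, by quoting Brualdi's symmetric basis-exchange theorem). The paper's route is more elementary in its prerequisites --- only the exchange axiom, applied greedily to an evolving base, with no circuit or rank machinery --- whereas yours is more structural, making the transversal nature of the statement explicit. Your preliminary restriction to $\cM|_{S\cup T}$ is a clean normalization that keeps all bases inside $S\cup T$; the paper tolerates $\bar B^*$ spilling outside $S\cup T$ and instead observes directly that the swapped-out elements $y$ always land in $\tilde B\setminus\bar B^*\subseteq S\setminus T$.
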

\begin{proof}
Let ${\bts}$ be a base of ${\cM}$, such that $T \subseteq {\bts}$.
By the Exchange property, $|{\bts}|-|S|$ elements from
${\bts} \setminus S$ can be added to the set $S$ to obtain a base of ${\cM}$.
Denote this base by  ${\tilde B}$.
For each element $x \in T\cap({\tilde B}\setminus S)$,
define $b(x)=\phi$.

We now iterate over all elements in
$T\setminus  {\tilde B} \subseteq {\bts} \setminus  {\tilde B}$.
Let $x\in T\setminus  {\tilde B}$ be the current element.
By Lemma \ref{lemma:matroid_bases},
there exists $y \in {\tilde B} \setminus {\bts}$, such that
$({\tilde B}\setminus \{y\}) \cup \{x\}$ is a base of ${\cM}$.
Note that $y$ must be in $S\setminus T$, since
${\tilde B} \setminus (S\setminus T) \subseteq {\bts}$.
We set ${\tilde B} = ({\tilde B}\setminus \{y\}) \cup \{x\}$ and $b(x)=y$,
and iterate on the next element in $T\setminus  {\tilde B}$.
By our construction, no two elements in
$T \setminus {\tilde B}$ are mapped to the same element in $S\setminus T$.
\end{proof}

The next lemma is due to Wolsey~\cite{w82}.
\begin{lemma}
\label{lemma:wolsey_ineq}
Given two positive integers, $P$ and $D$, and a set of  real-valued non-negative numbers $\gamma_i$, $i=1, \ldots , P$,
$$
\frac{\sum_{i=1}^P \gamma_i}{\min_{t\in \left[1..P\right]} (\sum_{i=1}^{t-1} \gamma_i + D \gamma_t)} \geq 1-\left(1- \frac 1D\right)^P \geq  1- e^{-P/D}
$$
\end{lemma}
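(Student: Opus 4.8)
The statement to prove is the two-part inequality of Lemma~\ref{lemma:wolsey_ineq}, a purely arithmetic fact about nonnegative reals $\gamma_1,\dots,\gamma_P$ and positive integers $P,D$. The second inequality, $1-(1-1/D)^P \ge 1-e^{-P/D}$, is immediate: since $1+x \le e^{x}$ for all real $x$, we have $(1-1/D)^P \le (e^{-1/D})^P = e^{-P/D}$, and subtracting from $1$ reverses the direction. So the real content is the first inequality, and that is where I would put the work.

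First I would reduce to bounding the denominator from below in terms of $\Gamma := \sum_{i=1}^{P}\gamma_i$. Let $t^\star \in [1..P]$ be the index achieving $\min_t\big(\sum_{i=1}^{t-1}\gamma_i + D\gamma_t\big)$, and call this minimum value $m$. The goal becomes $\Gamma/m \ge 1-(1-1/D)^P$, i.e.\ $m \le \Gamma\cdot\big(1-(1-1/D)^P\big)^{-1}$ — equivalently, I want a \emph{lower} bound on $\Gamma$ in terms of $m$. Denote the prefix sums $\Sigma_t = \sum_{i=1}^{t}\gamma_i$ with $\Sigma_0 = 0$. By definition of $m$, for every $t$ we have $\Sigma_{t-1} + D\gamma_t \ge m$, i.e.\ $\Sigma_t - \Sigma_{t-1} = \gamma_t \ge (m-\Sigma_{t-1})/D$, which rearranges to
\[
m - \Sigma_t \;\le\; \Big(1-\tfrac1D\Big)\,(m-\Sigma_{t-1}).
\]
Iterating this recursion from $t=P$ down to $t=0$ gives $m - \Sigma_P \le (1-1/D)^{P}(m-\Sigma_0) = (1-1/D)^{P}\,m$. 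Since $\Sigma_P = \Gamma$, this yields $\Gamma \ge m\big(1-(1-1/D)^P\big)$, which is exactly the first inequality after dividing by $m$ (noting $m>0$, since if all $\gamma_i=0$ the statement is vacuous or interpreted by convention, and otherwise $m\ge D\gamma_t>0$ cannot happen to be zero unless $\gamma_{t^\star}=0$ forcing some earlier prefix positive — I would handle the degenerate all-zero case separately or simply assume $\Gamma>0$ as is implicit in the lemma's use).

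The one subtlety worth stating carefully is the telescoping/monotonicity step: the inequality $m-\Sigma_t \le (1-1/D)(m-\Sigma_{t-1})$ multiplies correctly across $t$ only because both sides stay nonnegative — $m-\Sigma_{t-1}\ge 0$ follows since $\Sigma_{t-1} = \sum_{i<t}\gamma_i \le \sum_{i\le t}\gamma_i + (D-1)\gamma_t = \Sigma_{t-1}+D\gamma_t \le$ wait, more directly: $\Sigma_{t-1} \le \Sigma_{t-1} + D\gamma_t$ and the latter is $\ge m$ only gives the wrong direction, so instead I note $m \le \Sigma_{t-1}+D\gamma_t$ and also, for the specific minimizing index and induction, $m - \Sigma_0 = m \ge 0$ while each step preserves sign because the factor $1-1/D \in [0,1)$ (here $D\ge 1$ integer, and if $D=1$ the bound $1-(1-1/D)^P = 1$ is trivially an upper bound on $\Gamma/m$ since $m \ge \Sigma_{t-1}+\gamma_t \ge \Sigma_t$ for the minimizer... ). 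I expect the main obstacle to be nothing deep but rather getting these sign/boundary conditions ($D=1$, some $\gamma_i=0$, $\Gamma=0$) stated cleanly so the induction is airtight; the core recursion itself is a one-line rearrangement. I would present it as: establish the recursion, prove by downward induction on $t$ that $m-\Sigma_t \le (1-1/D)^{P-t} m$ for all $0\le t\le P$, instantiate at $t=0$, read off $\Gamma \ge m(1-(1-1/D)^P)$, divide, and append the one-line $e^{-x}$ comparison for the second inequality.
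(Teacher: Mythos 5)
Your proof is correct in substance, and it is worth noting that the paper gives no proof of this lemma at all: it is simply attributed to Wolsey~\cite{w82}. So your argument is a self-contained elementary derivation of a fact the paper only cites. The route you take --- set $m=\min_t(\Sigma_{t-1}+D\gamma_t)$, observe $\gamma_t\ge(m-\Sigma_{t-1})/D$, hence $m-\Sigma_t\le\bigl(1-\tfrac1D\bigr)(m-\Sigma_{t-1})$, chain this $P$ times to get $m-\Sigma_P\le\bigl(1-\tfrac1D\bigr)^P m$, i.e.\ $\sum_{i=1}^P\gamma_i\ge m\bigl(1-(1-\tfrac1D)^P\bigr)$, then append $1-\tfrac1D\le e^{-1/D}$ for the second inequality --- is exactly the standard recursion underlying the classical greedy analyses, so there is no conflict with the source; you have just made the citation verifiable.

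Two small cleanups. First, your worry about signs is unnecessary: the one-step inequality holds for every $t$ regardless of the sign of $m-\Sigma_{t-1}$, and chaining only requires multiplying an inequality by the nonnegative constant $1-\tfrac1D$, so no nonnegativity of the quantities $m-\Sigma_t$ is ever needed (this also disposes of $D=1$, where the factor is $0$). Second, the induction statement in your closing sentence has the indices reversed: the invariant should be $m-\Sigma_t\le\bigl(1-\tfrac1D\bigr)^{t}m$, proved by forward induction on $t$ and instantiated at $t=P$ (as in your correct chaining two paragraphs earlier), not $m-\Sigma_t\le\bigl(1-\tfrac1D\bigr)^{P-t}m$ instantiated at $t=0$, which is false at $t=0$ whenever $D>1$ and $m>0$. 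Finally, the degenerate case $m=0$ can occur even with $\sum_i\gamma_i>0$ (e.g.\ $\gamma_1=0$); there the left-hand ratio is undefined or infinite, so a one-line remark that the lemma is invoked only with a positive denominator suffices.
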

The following generalizes a result of Sviridenko~\cite{s04}.
We include the proof here for completeness.
\begin{lemma}
\label{lemma:f_on_Y_swaps}
Given an element $u_\ell \in {\sopt}$,  $\ell \geq 3$, and
a subset $W \subseteq U \setminus \{ u_1, u_2, u_\ell\}$, for any swap  $(u_\ell ,w)$, such that  $w \in W \cup \{  \phi \}$,
$
f((Y \cup W  \setminus \{ w \}) \cup \{ u_\ell \} ) - f(Y \cup W ) \leq f(Y)/2.
$
\end{lemma}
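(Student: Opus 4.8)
The plan is to exploit the way $u_\ell$ was chosen in the ordering of $\sopt$: by definition, $u_{i+1}$ has maximum marginal profit with respect to the prefix $\{u_1,\dots,u_i\}$, so in particular both $u_1$ and $u_2$ have marginal profit (as singletons / over the empty prefix and over $\{u_1\}$) at least as large as the marginal contribution of any later element $u_\ell$ over \emph{any} prefix that contains $\{u_1,u_2\}$. The key inequality I would isolate first is that the marginal gain of adding $u_\ell$ to the set $(Y\cup W)\setminus\{w\}$ is bounded by the marginal gain of adding $u_\ell$ to a \emph{subset} of it, by submodularity (diminishing returns); the natural subset to use is one on which the greedy choice of $u_1,u_2$ gives control. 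Concretely, I would aim to show
\[
f\bigl((Y\cup W\setminus\{w\})\cup\{u_\ell\}\bigr)-f(Y\cup W\setminus\{w\})\ \le\ f(\{u_1\})\quad\text{and}\quad \le\ f(\{u_1,u_2\})-f(\{u_1\}),
\]
using submodularity to pass from the large set down to the singletons $\emptyset$ and $\{u_1\}$, together with the optimality-ordering of $\sopt$ which guarantees $f(\{u_\ell\})-f(\emptyset)\le f(\{u_1\})$ and $f(\{u_1,u_\ell\})-f(\{u_1\})\le f(\{u_1,u_2\})-f(\{u_1\})$. Since $w\notin\{u_1,u_2,u_\ell\}$ and $\{u_1,u_2\}\subseteq Y\cup W\setminus\{w\}$, the subset relations needed for diminishing returns hold.

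Next I would combine the two bounds. Adding them gives
\[
2\Bigl[f\bigl((Y\cup W\setminus\{w\})\cup\{u_\ell\}\bigr)-f(Y\cup W\setminus\{w\})\Bigr]\ \le\ f(\{u_1\})+\bigl(f(\{u_1,u_2\})-f(\{u_1\})\bigr)=f(\{u_1,u_2\})=f(Y),
\]
so the marginal gain of inserting $u_\ell$ is at most $f(Y)/2$. The final step is to replace $f(Y\cup W\setminus\{w\})$ with $f(Y\cup W)$ on the left-hand side: since $f$ is monotone and $w\in W\cup\{\phi\}$, we have $f(Y\cup W\setminus\{w\})\le f(Y\cup W)$ (with equality when $w=\phi$), and hence
\[
f\bigl((Y\cup W\setminus\{w\})\cup\{u_\ell\}\bigr)-f(Y\cup W)\ \le\ f\bigl((Y\cup W\setminus\{w\})\cup\{u_\ell\}\bigr)-f(Y\cup W\setminus\{w\})\ \le\ f(Y)/2,
\]
which is exactly the claimed inequality.

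The main obstacle I anticipate is stating the two submodularity reductions carefully: I need to verify that in both applications the ``smaller'' set is genuinely a subset of the ``larger'' one and that $u_\ell$ lies outside both, so that the diminishing-returns form applies verbatim. For the first bound I compare adding $u_\ell$ to $\emptyset$ versus to $Y\cup W\setminus\{w\}$ — fine since $u_\ell\notin W$ and $u_\ell\notin Y$. For the second I compare adding $u_\ell$ to $\{u_1\}$ versus to $Y\cup W\setminus\{w\}$, which requires $\{u_1\}\subseteq Y\cup W\setminus\{w\}$; this holds because $w\ne u_1$. The only other subtlety is making sure the optimality-based inequalities $f(\{u_\ell\})\le f(\{u_1\})$ and $f(\{u_1,u_\ell\})-f(\{u_1\})\le f(\{u_1,u_2\})-f(\{u_1\})$ are exactly what the prefix-ordering of $\sopt$ buys us — the first because $u_1$ maximizes $f(\{u\})$ over $\sopt$, the second because $u_2$ maximizes $f(\{u,u_1\})-f(\{u_1\})$ over $\sopt\setminus\{u_1\}$ and $\ell\ge 3$ means $u_\ell\in\sopt\setminus\{u_1\}$. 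Everything else is bookkeeping.
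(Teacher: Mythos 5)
Your proof is correct and takes essentially the same route as the paper: both establish the two bounds $f((Y\cup W\setminus\{w\})\cup\{u_\ell\})-f(Y\cup W\setminus\{w\})\le f(\{u_1\})$ and $\le f(Y)-f(\{u_1\})$ via submodularity plus the prefix-optimality of $u_1, u_2$, add them to get the factor $1/2$, and use monotonicity to pass from $f(Y\cup W\setminus\{w\})$ to $f(Y\cup W)$. The only cosmetic difference is that you apply the monotonicity step at the end whereas the paper applies it first, which is immaterial.
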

\begin{proof}
    The ordering of the elements of ${\sopt}$, and the fact that $f(\cdot)$
    is submodular, monotone, and non-negative imply that for
$u_\ell \in {\sopt}$, where $\ell \geq 3$ and the subsets $Y$ and $W$, the following inequalities are satisfied:

\begin{align*}
    f\left((Y \cup W \setminus \{ w \}) \cup \{ u_\ell \}\right) - f(Y \cup W) & \leq
    f\left((Y \cup W \setminus \{ w \}) \cup \{ u_\ell \}\right) - f(Y \cup W \setminus \{w \})\\
& \leq  f(\{ u_\ell \}) - f(\emptyset)
\leq f(\{ u_1 \}),
\end{align*}
and
\begin{align*}
    f\left((Y \cup W \setminus \{ w \}) \cup \{ u_\ell \}\right) - f(Y \cup W) & \leq
    f\left((Y \cup W \setminus \{ w \}) \cup \{ u_\ell \}\right) - f(Y \cup W \setminus \{w \})\\
    & \leq f(\{ u_1 \} \cup \{ u_\ell \}) - f(\{ u_1 \}) \\
    & \leq f(\{ u_1, u_2 \} = Y) - f(\{ u_1 \}),
\end{align*}

Summing the two inequalities, we have
that
\begin{align*}
    2\left(f\left((Y \cup W \setminus \{ w \}) \cup \{ u_\ell \}\right) -f(Y \cup W)\right) & \leq
    f(Y) -f(\{ u_1 \}) + f(\{ u_1 \}) 
   =f(Y)
\end{align*}
\end{proof}

\begin{proof}[Proof of Theorem \ref{thm:amatk_appx_ratio}]
Suppose we have guessed
$u_1$ and $u_2$ correctly.
Let $S\subseteq U$ denote the subset output by the algorithm.
\remove{
WLOG assume that $\sopt$ is a basis of ${\cM}$.
If this is not the case we can add elements to  $\sopt$ to make it a basis
and this would not decrease the value of $f(\cdot)$ due to its monotonicity.
If $S$ is not a basis, let $S'$ be the extension of $S$ to a basis.
By the Bijective exchange Axiom~\cite{B69} (see also~\cite{B73}), there exists a bijection
$b': {\sopt} \rightarrow S'$, such that $(S' \setminus \{b(u )\}) \cup \{ u \} \in {\cF}$, for all $u \in {\sopt}$.
}
By Lemma~\ref{lemma: mapping_sopt_s}, there exists
a mapping $b: {\sopt}\setminus S \rightarrow (S\setminus\sopt)\cup\{\phi\}$,
such that $(S \setminus \{b(u )\}) \cup \{ u \} \in {\cF}$, for all $u \in {\sopt}\setminus S$,
and $|b^{-1}(y)| \le 1$, for all $y \in S\setminus{\sopt}$.

Throughout the analysis, we refer to {\em iterations} of the outer loop (Line~\ref{alg1:start_iteration}.) in Algorithm~\ref{alg:submod_km}.
Thus, each iteration, except maybe the last one, consists of a single greedy swap.
We distinguish between two cases, based on the last iteration of the algorithm.

\par\noindent{\bf Case 1:}
For all elements $x\in \sopt\setminus S$,
the swap $(x,b(x))$ was rejected in the last iteration because it satisfied $\rho_{(x,b(x))} \le 0$,
and not because $c_x  - c_{b(x)} + c(S) > B$.
That is, the swap $(x,b(x))$ did not violate the knapsack constraint.

From Case 1 assumption, it follows that no swap $(x,b(x))$, for
an element $x \in \sopt\setminus S$, could increase the value of $f(\cdot)$.
Hence, we have that
\begin{align}
\label{eq:LS}
f(\sopt)-f(S) & \leq \sum_{x \in\sopt\setminus S} \left( f(S \cup \{ x \}) -f(S)\right) \nonumber \\
            & \leq \sum_{x \in\sopt\setminus S}  \left( f(S \cup \{x \} \setminus  \{b(x) \}) - f(S \setminus \{ b(x) \})\right) \nonumber \\
            & \leq \sum_{x \in\sopt\setminus S}  \left( f(S) - f(S \setminus \{ b (x) \}) \right) .
\end{align}

The first and second inequalities follow from submodularity,
and the third from our assumption about the swaps.
Recall that for every $y\in S\setminus\sopt$ we have $|b^{-1}(y)|\le 1$, thus the (multi) set
$\{b(x)\}_{x\in\sopt\setminus S}$ does not contain any element from $S\setminus\sopt$ with
multiplicity higher than 1.
Let $\{b(x)\}_{x\in\sopt\setminus S}\setminus \phi=\{u_1, \ldots , u_K\}$,
where $K\le |\sopt\setminus S|$ and the indices are assigned arbitrarily.
Then, by  (\ref{eq:LS}), and using submodularity and monotonicity, we have
\begin{align}
\label{eq:LS_single_matroid}
f(\sopt)-f(S) & \le \sum_{x \in\sopt\setminus S}  \left( f(S) - f(S \setminus \{ b (x) \})\right)  \nonumber \\
& \leq
\sum_{i=1}^{K} \left( f(\{u_1, \ldots , u_i\}) - f(\{u_1, \ldots , u_{i-1}\})\right) \nonumber \\
&\le f(\{u_1, \ldots , u_{K}\}) \le f(S).
\end{align}
Hence, in this case, we have that {\amatk} yields a $\frac 12$-approximation to the optimum.

\par\noindent{\bf Case 2:}
At least one swap $(x,b(x))$, for $x\in \sopt\setminus S$, considered in
the last iteration,
violated the knapsack constraint; namely,
$c_x  - c_{b(x)} + c(S) > B$.

Denote by $S^\ell$ the subset of elements in the solution after iteration $\ell$
of the algorithm, for $\ell \geq 0$, where $S^0=Y$.
Let $b_\ell: \sopt\setminus S^\ell \rightarrow (S^\ell\setminus\sopt)\cup\{\phi\}$
be the mapping guaranteed by
Lemma~\ref{lemma: mapping_sopt_s}.
We note that in all iterations except the last the algorithm performs a swap
$(x_t,y_t)$  (where $y_t$ may be $\phi$).
Denote $\rho_{(x_t,y_t)}$ by $\rho_t$.

Let $t^*+1$ be the first iteration in which
a swap $(u_{t^*+1},b_{t^*}(u_{t^*+1}))$, for $u_{t^*+1}\in \sopt\setminus\stt$,
was considered in iteration $t^*+1$ and was rejected since it
violated the knapsack constraint. Clearly, such an iteration exists by the assumption
of Case 2.
By the choice of $t^*$, for all $t=0, \ldots , t^*-1$,
for every $x\in \sopt\setminus\stt$ such that $(x,b_t(x))$ is a swap, we must have
$\rho_{(x,b_t(x))} \le \rho_{t+1}$.

Define the function $g(S)=f(S)-f(Y)$. Since $f(\cdot)$ is submodular and monotone,
so is $g(\cdot)$.
Hence,  for all $t=0, \ldots , t^*$,
$$
g({\sopt}) \leq  g({\stt}) + \sum_{u \in {{\sopt} \setminus {\stt}}} ( g({\stt} \cup \{ u \}) -g({\stt})).
$$

\remove{
By Lemma~\ref{lemma: mapping_sopt_s}, we can partition the elements in ${\sopt} \setminus {\stt}$
to two disjoint sets. Let
$b(x)$ be the element to which $x \in {\sopt} \setminus {\stt}$ is mapped.

Partition the set ${\sopt} \setminus {\stt}$ into two sets according to $b_t(x)$,
for $x\in {\sopt} \setminus {\stt}$.
We say that $x$ is a {\em non-conflicting} element if $b_t(x) = \phi$; else, $x$ is {\em conflicting}.
Let ${\sopt} \setminus {\stt} =A\cup C$,
where $A$ is the set of non-conflicting elements,
and $C$ is the set of conflicting elements.
Using submodularity, we can now write:
\begin{align}
\label{eq:partition}
g({\sopt}) - g({\stt})
& \leq \displaystyle {\sum_{x \in {A \cup C}}} \left(g({\stt} \cup \{ x \}) - g({\stt})\right)  \nonumber\\
& \leq \displaystyle {\sum_{x \in A}} \left(g({\stt} \cup \{ x \}) - g({\stt}) \right)  \nonumber\\
& +    \displaystyle {\sum_{x \in C}} \left(g(({\stt} \setminus \{ b_t(x) \})  \cup \{ x \}) - g({\stt} \setminus \{ b_t(x) \}) \right)  \nonumber\\
& \leq \displaystyle {\sum_{x \in A}} \left(g({\stt} \cup \{ x \}) - g({\stt}) \right) \nonumber\\
&+     \displaystyle {\sum_{x \in C}} \left(g({\stt}) -  g({\stt} \setminus \{ b_t(x) \}) \right)+\left(g(({\stt} \setminus \{ b_t(x) \}) \cup \{ x \})  - g({\stt}) \right)
\end{align}
\begin{claim}
\label{claim:bound_R1_plus_R2}
$\displaystyle {\sum_{x \in C}}  (g({\stt}) - g({\stt} \setminus \{ b_t(x) \})) \leq g({\stt})$.
\end{claim}
\begin{proof}
As noted above, the (multi) set $\{b_t(x)\}_{x\in C}$
does not contain any element from $\stt\setminus\sopt$ with
multiplicity higher than 1.
Let $\{b_t(x)\}_{x\in C}=\{u_1, \ldots , u_K\}$,
where $K\le |C|$, and the indices are assigned arbitrarily.
Then, similar to Inequality~(\ref{eq:LS_single_matroid}),
\begin{align*}
 \displaystyle {\sum_{x \in C}}  (g({\stt}) - g({\stt} \setminus \{ b_t(x) \}))
 & \leq \displaystyle { \sum_{i=1}^K}(g(u_1, \ldots , u_i) -g(u_1, \ldots , u_{i-1}) )
 \leq g({\stt})
\end{align*}
\end{proof}
Using Claim~\ref{claim:bound_R1_plus_R2}, we have for any $t=0, \ldots , t^*-1$,

\begin{align}
\label{eq:rho}
g({\sopt}) & \leq 2g({\stt}) + \displaystyle { \sum_{x \in A} } (g({\stt} \cup \{ x \}) - g({\stt}) )
 + \sum_{x \in C} (g({\stt} \setminus \{ b_t(x) \} \cup \{ x \})  - g({\stt}) ) \nonumber\\
&= 2g({\stt})  + \displaystyle { \sum_{x \in A} } (f({\stt} \cup \{ x \}) - f({\stt}) )
+\sum_{x \in C} (f({\stt} \setminus \{ b_t(x) \} \cup \{ x \})  - f({\stt}))  \nonumber\\
& \leq  2g({\stt}) + (B - c(Y)) \rho_{t+1}.
\end{align}
}

Using submodularity, we can now write:
\begin{align}
\label{eq:partition}
g({\sopt}) - g({\stt})
& \leq \displaystyle {\sum_{x \in {{\sopt} \setminus {\stt}}}}
\left(g({\stt} \cup \{ x \}) - g({\stt})\right)  \nonumber\\
& \leq \displaystyle {\sum_{x \in {{\sopt} \setminus {\stt}}}}
\left(g(({\stt} \setminus \{ b_t(x) \})  \cup \{ x \}) - g({\stt} \setminus \{ b_t(x) \}) \right)  \nonumber\\
& \leq \displaystyle {\sum_{x \in {{\sopt} \setminus {\stt}}}}
\left(g({\stt}) -  g({\stt} \setminus \{ b_t(x) \}) \right)+
\left(g(({\stt} \setminus \{ b_t(x) \}) \cup \{ x \})  - g({\stt}) \right)
\end{align}
\begin{claim}
\label{claim:bound_R1_plus_R2}
$\displaystyle {\sum_{x \in {{\sopt} \setminus {\stt}}}}
(g({\stt}) - g({\stt} \setminus \{ b_t(x) \})) \leq g({\stt})$.
\end{claim}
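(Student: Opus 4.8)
The plan is to discard the dummy images, reduce the left-hand side to a sum of single-element marginals over \emph{distinct} elements of $\stt\setminus\sopt$, bound that by the corresponding joint marginal via the standard submodular telescoping inequality, and finish with monotonicity using the fact that $Y$ survives inside $\stt$.

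First I would dispose of the images equal to $\phi$: since $\phi\notin U$ we have $\stt\setminus\{\phi\}=\stt$, so every term with $b_t(x)=\phi$ contributes $g(\stt)-g(\stt)=0$. Hence the sum equals $\sum_{x\in C}\left(g(\stt)-g(\stt\setminus\{b_t(x)\})\right)$, where $C=\{x\in\sopt\setminus\stt:\ b_t(x)\neq\phi\}$. By Lemma~\ref{lemma: mapping_sopt_s} (applied with $S=\stt$, $T=\sopt$) the mapping $b_t$ satisfies $|b_t^{-1}(y)|\le 1$ for every $y\in\stt\setminus\sopt$, so the images $\{b_t(x):x\in C\}$ are pairwise distinct; write them as $v_1,\ldots,v_K$ in an arbitrary order, with $K\le|C|\le|\sopt\setminus\stt|$, and note that each $v_i\in\stt\setminus\sopt$.

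The key step is the inequality $\sum_{i=1}^{K}\left(g(\stt)-g(\stt\setminus\{v_i\})\right)\le g(\stt)-g(\stt\setminus\{v_1,\ldots,v_K\})$, proved by telescoping: for each $i$, since $\stt\setminus\{v_1,\ldots,v_i\}\subseteq\stt\setminus\{v_i\}$ and the $v_j$ are distinct, submodularity of $g$ gives $g(\stt)-g(\stt\setminus\{v_i\})\le g(\stt\setminus\{v_1,\ldots,v_{i-1}\})-g(\stt\setminus\{v_1,\ldots,v_i\})$; summing over $i=1,\ldots,K$ collapses the right-hand side as desired.

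Finally I would invoke monotonicity. Because $v_1,\ldots,v_K\in\stt\setminus\sopt$ while $Y\subseteq\sopt$, the set $Y$ is disjoint from $\{v_1,\ldots,v_K\}$; and $Y\subseteq\stt$, since the swap step of Algorithm~\ref{alg:submod_km} never removes an element of $Y$ (the guard $y\notin Y$). Hence $Y\subseteq\stt\setminus\{v_1,\ldots,v_K\}$, so monotonicity of $g$ gives $g(\stt\setminus\{v_1,\ldots,v_K\})\ge g(Y)=f(Y)-f(Y)=0$, and combining with the previous step yields $\sum_{x\in\sopt\setminus\stt}\left(g(\stt)-g(\stt\setminus\{b_t(x)\})\right)\le g(\stt)$. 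The only real subtlety is this last step: one must note both that $Y$ stays inside $\stt$ throughout the run and that $Y$ avoids every $v_i$, which together are exactly what sharpens the estimate from $f(\stt)$ (the naive bound, as in~(\ref{eq:LS_single_matroid})) down to $g(\stt)=f(\stt)-f(Y)$; everything else is routine submodularity bookkeeping.
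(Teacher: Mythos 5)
Your proof is correct, and it takes a genuinely different telescoping route from the paper's. The paper argues ``similar to Inequality~(\ref{eq:LS_single_matroid})'', i.e., it bounds each term $g(\stt)-g(\stt\setminus\{u_i\})$ by the marginal $g(\{u_1,\ldots,u_i\})-g(\{u_1,\ldots,u_{i-1}\})$ and telescopes \emph{up from the empty set}; the chain lands at $g(\{u_1,\ldots,u_K\})-g(\emptyset)$. You instead \emph{peel down from} $\stt$: each term is bounded by $g(\stt\setminus\{v_1,\ldots,v_{i-1}\})-g(\stt\setminus\{v_1,\ldots,v_i\})$, and the chain lands at $g(\stt)-g(\stt\setminus\{v_1,\ldots,v_K\})$. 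Your version is the cleaner of the two for the function $g=f-f(Y)$, because every set in your chain still contains $Y$, so the endpoint is bounded directly via $g(\stt\setminus\{v_1,\ldots,v_K\})\ge g(Y)=0$. The paper's chain, taken literally, drops below $Y$: since $g(\emptyset)=f(\emptyset)-f(Y)\le 0$, the quantity $g(\{u_1,\ldots,u_K\})-g(\emptyset)$ can strictly exceed $g(\stt)$ (e.g., for the monotone submodular $f(S)=\min\{|S|,1\}$ with $Y=\{a,b\}$, $\stt=\{a,b,c\}$, $v_1=c$), so the chain should really be anchored at $Y$, i.e., one should use $g(Y\cup\{u_1,\ldots,u_i\})$ in place of $g(\{u_1,\ldots,u_i\})$. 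Your argument builds in this anchoring automatically by never leaving $\stt$, and it makes explicit the two facts the paper uses implicitly: the algorithm's guard $y\notin Y$ keeps $Y\subseteq\stt$, and the images $v_i\in\stt\setminus\sopt$ are disjoint from $Y\subseteq\sopt$. Both routes are valid submodular bookkeeping, but yours is the more careful write-up.
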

\begin{proof}
Clearly, for any $x \in {{\sopt} \setminus {\stt}}$ such that $b_t(x)=\phi$,
we have $g({\stt}) - g({\stt} \setminus \{ b_t(x) \})=0$. Thus, we need only to consider the sum over the subset $C \subseteq {{\sopt} \setminus {\stt}}$, such that for any $x\in C$,  $b_t(x)\neq\phi$.
As noted above, the (multi) set $\{b_t(x)\}_{x\in C}$
does not contain any element from $\stt\setminus\sopt$ with
multiplicity higher than 1.
Let $\{b_t(x)\}_{x\in C}=\{u_1, \ldots , u_K\}$,
where $K= |C|$, and the indices are assigned arbitrarily.
Then, similar to Inequality~(\ref{eq:LS_single_matroid}),
\begin{align*}
 \displaystyle {\sum_{x \in C}}  (g({\stt}) - g({\stt} \setminus \{ b_t(x) \}))
 & \leq \displaystyle { \sum_{i=1}^K}(g(u_1, \ldots , u_i) -g(u_1, \ldots , u_{i-1}) )
 \leq g({\stt})
\end{align*}
\end{proof}
Using Claim~\ref{claim:bound_R1_plus_R2}, we have for any $t=0, \ldots , t^*-1$,

\begin{align}
\label{eq:rho}
g({\sopt}) & \leq 2g({\stt})
+ \sum_{x \in {{\sopt} \setminus {\stt}}}
\left( g({\stt} \setminus \{ b_t(x) \} \cup \{ x \})  - g({\stt}) \right) \nonumber\\
&= 2g({\stt})
+ \sum_{x \in {{\sopt} \setminus {\stt}}}
\left( f({\stt} \setminus \{ b_t(x) \} \cup \{ x \})  - f({\stt}) \right)  \nonumber\\
& \leq  2g({\stt}) + (B - c(Y)) \rho_{t+1}.
\end{align}

The last inequality is due to the following:
\begin{enumerate}
\item[(a)]
    By our choice of $t^*$,
    for any swap $(x,b_t(x))$,
    $f({\stt} \setminus \{ b_t(x) \} \cup \{ x \}) - f({\stt}) \leq c_x  {\rtto}$.
\item[(b)] $\sum_{x \in {\sopt} \setminus {\stt}} c_x \leq B - c(Y)$.
\end{enumerate}
Now, from the above discussion, we have for any $t=0, \ldots , t^*-1$,
\begin{equation}
\label{eq:sopt_vs_stt_greedy}
g({\sopt}) \leq 2  \left ( g({\stt}) +\frac{(B - c(Y))}{2} {\rtto} \right  ).
\end{equation}
For $t=1, \ldots , t^*$,
let $B_{t} = \sum_{\tau=1}^t (g(S^\tau)-g(S^{\tau-1}))/\rho_\tau$, and $B_0=0$.
Note that $(g(S^\tau)-g(S^{\tau-1}))/\rho_\tau = c_{x_\tau}$.
Since the algorithm performs swaps, we have that
$B_t \ge c(S^t)$.
Also, let $\rho_{t^*+1} = \rho_{(u_{t^*+1},b_{t^*}(u_{t^*+1}))}$, and
$B_{t^*+1} = B_{t^*} + c_{u_{t^*+1}}$
We note that, by the definition of $u_{t^*+1}$, we have that $B'= B_{t^*+1} > B - c(Y) =B''$.
For $j=1, \ldots , B_{t^*+1}$, let $\gamma_j = {\rho_t}$ when
$j=B_{t-1}+1, \ldots, B_{t}$.
Using the above notation, we have that
\begin{equation}
\label{eq:g_value_sum_2t_star}
g(({\stts}\setminus \{ b_{t^*}(u_{t^*+1}) \}) \cup \{ u_{t^*+1} \}) =
\sum_{\tau=1}^{t^*+1} (B_\tau -B_{\tau -1}) \rho_{\tau} = \sum_{j=1}^{B'} \gamma_j,
\end{equation}
and for $t=1, \ldots , t^*$,
\begin{equation}
\label{eq:g_value_sum_2t}
g(S^{t}) = \sum_{\tau=1}^t  (B_\tau -B_{\tau -1}) \rho_{\tau} = \sum_{j=1}^{B_{t}} \gamma_j.
\end{equation}

Now, we note that
\begin{align*}
    \min_{s \in \left[1..B'\right]}  \left \{ \sum_{j=1}^{s-1} \gamma_j + \frac{B''}{2} \gamma_s  \right  \} & =
\min_{t \in \left[ 1..t^* \right]}
\left \{ \sum_{j=1}^{B_{t}} \gamma_j + \frac{B''}{2} \gamma_{B_{t+1}}   \right  \} \\
&=
\min_{t \in \left[ 1..t^* \right]}
\left \{ g({\stt}) +  \frac{B''}{2} {\rtto} \right  \}.
\end{align*}

Using Lemma~\ref{lemma:wolsey_ineq} and Inequality (\ref{eq:sopt_vs_stt_greedy}), we have
\begin{align*}
    \displaystyle{\frac{g(S^{t^*} \setminus \{ b_{t^*}(u_{t^*+1}) \}  \cup \{ u_{t^*+1} \})}{g({\sopt})} }
    & \geq \displaystyle{\frac{\sum_{j=1}^{B'} \gamma_j}{2 \left[  \min_{s\in \left[1..B'\right]} \sum_{j=1}^{s-1}
 \gamma_j + \frac{B''}{2} \gamma_s\right ]}} \\ \\
& \geq \frac{1}{2} \left(  1-(1- \frac{2}{B''})^{B'} \right ) \\ \\
& \geq \frac{1}{2} \left(  1 - e^{- \frac{2B'}{B''}} \right) \geq \frac{1}{2} (1 - e^{-2})
\end{align*}
Finally, using Lemma \ref{lemma:f_on_Y_swaps},
\begin{align*}
    f({S^{t^*}}) &= f(Y) +g(S^{t^*}) \\
                 &= f(Y) + g(({\stts} \setminus  \{ b_{t^*}(u_{t^*+1}) \}) \cup \{ u_{t^*+1} \}) \\
        &\phantom{= f(Y)}-\left(g(({\stts}\setminus  \{ b_{t^*}(u_{t^*+1}) \}) \cup \{ u_{t^*+1} \}) - g({\stts})\right) \\
                 &= f(Y) + g(({\stts} \setminus  \{ b_{t^*}(u_{t^*+1}) \}) \cup \{ u_{t^*+1} \}) \\
        &\phantom{= f(Y)}- \left(f(({\stts} \setminus  \{ b_{t^*}(u_{t^*+1}) \}) \cup \{ u_{t^*+1} \}) - f({\stts})\right) \\
                 &\geq  f(Y) + \frac{1}{2} (1 -e^{-2})g({\sopt}) - \frac{f(Y)}2 \\
                 &= \frac{f(Y)}2-\frac{f(Y)}{2} (1 -e^{-2})+\frac{1}{2} (1 -e^{-2}) f({\sopt}) \\
                 &\ge \frac{1}{2} (1 -e^{-2}) f({\sopt})
\end{align*}
\end{proof}

\mysubsection{Running Time Analysis}
\label{sec:runtime_single_matroid}

We note that the running time of Algorithm {\amatk}, as described above, may not be polynomial, since the number
of greedy swaps cannot be bounded polynomially.
Below, we show how the
algorithm can be implemented in polynomial-time with no harm to the approximation ratio.

\begin{theorem}
\label{thm:single_matroid_polytime}
Algorithm {\amatk} can be modified to run in ${\tilde{O}}(n^6)$ time.
\end{theorem}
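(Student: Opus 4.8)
The plan is to identify the only source of super-polynomial behavior in Algorithm~{\amatk} --- namely, that a greedy swap may improve $f(S)$ by an arbitrarily small amount, so the number of iterations of the outer loop cannot be bounded --- and to eliminate it by thresholding the swaps, at the cost of a negligible additive loss. First I would guess, as before, the two elements $u_1,u_2$ of largest marginal profit in $\sopt$; there are $O(n^2)$ such guesses. Having fixed $Y=\{u_1,u_2\}$, note that $f(\sopt) \le (p-1) f(Y)/2 + f(Y)$ by the ordering of $\sopt$ (each later marginal is at most $f(u_2)\le f(Y)/2$ and $p \le n$), so in particular $f(\sopt) \le n\cdot f(Y)$. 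The modification is: only perform a swap $(x,y)$ when its \emph{total} (not density) gain $f((S\setminus\{y\})\cup\{x\}) - f(S)$ is at least $\eps' f(Y)/n$ for a suitable constant-scaled parameter $\eps'$; equivalently, when $\rho_{(x,y)} c_x \ge \eps' f(Y)/n$. Then every iteration but the last raises $f(S)$ by at least $\eps' f(Y)/n$, and since $f(S)\le f(\sopt)\le n f(Y)$ throughout, there are at most $n^2/\eps'$ iterations. Choosing $\eps' = \Theta(1/n)$ (say $\eps' = 1/n$) keeps the iteration count at $O(n^3)$ while the total discarded profit over all iterations is at most $n\cdot(\eps' f(Y)/n) = \eps' f(Y) \le f(Y)/n$, which is lower-order.

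Next I would re-run the analysis of Theorem~\ref{thm:amatk_appx_ratio} against this thresholded algorithm. In Case~1, the termination condition now says that no swap $(x,b(x))$ gains more than $\eps' f(Y)/n$, so the chain of inequalities~(\ref{eq:LS})--(\ref{eq:LS_single_matroid}) picks up an additive error of at most $|\sopt\setminus S|\cdot \eps' f(Y)/n \le \eps' f(Y)$, giving $f(\sopt) \le 2 f(S) + \eps' f(Y) \le 2f(S) + f(S)/n$, still a $(\tfrac12 - o(1))$-approximation. In Case~2, the key point is that $t^*+1$ is still well defined (the first iteration at which the swap repairing $\sopt\setminus S^{t^*}$ is rejected for violating the knapsack bound, as opposed to failing the threshold), and inequality~(\ref{eq:rho}) acquires the same additive slack, which propagates through~(\ref{eq:sopt_vs_stt_greedy}) and the Wolsey-style estimate~(\ref{eq:g_value_sum_2t_star})--(\ref{eq:g_value_sum_2t}) as a lower-order term; the final bound degrades from $\tfrac12(1-e^{-2})$ to $\tfrac12(1-e^{-2}) - O(1/n)$. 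Since the theorem only claims a running time and we may absorb the $O(1/n)$ loss (or, as is standard, fix the ratio exactly by a final comparison of the output with the best singleton and the best pair, and by taking $\eps'$ a small enough function of $n$ so that the loss is strictly below the granularity forced by $f$ being, WLOG, integer-valued after scaling), the approximation guarantee is preserved.

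Finally I would account for the running time. For each of the $O(n^2)$ guesses of $(u_1,u_2)$: the algorithm runs $O(n^3)$ outer iterations; each iteration rebuilds the swap collection $L=L(S)$, which has $O(n^2)$ candidate pairs $(x,y)$ with $x\notin S$, $y\in S\cup\{\phi\}$, and for each pair tests matroid independence of $(S\setminus\{y\})\cup\{x\}$ (one independence-oracle call) and evaluates $\rho_{(x,y)}$ (one value-oracle call); sorting or repeatedly extracting the maximum costs $O(n^2\log n)$. So one guess costs ${\tilde O}(n^3\cdot n^2) = {\tilde O}(n^5)$, and the total is ${\tilde O}(n^7)$ --- a factor of $n$ more than claimed. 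To reach ${\tilde O}(n^6)$ I would tighten the iteration bound: observe that each \emph{accepted} swap strictly increases $|S|$ by $0$ but the relevant progress measure is $f$; a sharper choice is $\eps' = \Theta(1)$ with the per-iteration gain measured as a $(1+\eps'/n)$-\emph{multiplicative} increase of $f(S) - $ (lower bound), or alternatively amortize: between two iterations that touch the same element at most $O(\log_{1+1/n} n) = O(n\log n)$ can occur, but globally only $O(n^2\log n)$ swaps total are productive because $f(S)$ ranges in $[f(Y), nf(Y)]$ and rises geometrically --- giving $O(n\log n)$ iterations, hence ${\tilde O}(n\log n\cdot n^2) = {\tilde O}(n^3)$ per guess and ${\tilde O}(n^5)$ overall, with the extra factor to reach $n^6$ coming from a finer matroid-oracle accounting or from the cost of the independence queries. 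The main obstacle, then, is not the approximation bookkeeping --- which is a routine $\eps$-perturbation of the existing proof --- but pinning down the iteration count and the per-iteration cost tightly enough to land exactly at ${\tilde O}(n^6)$; I expect the intended argument uses a geometric-progress / scaling bound of $O(\eps^{-1}\log n)$ iterations together with an $O(n^3)$ cost per iteration for generating and processing $L(S)$ with the matroid independence tests.
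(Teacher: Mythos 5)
Your plan is aimed in the right direction (threshold the swaps and bound iterations via $f(\sopt) \le \frac{n}{2}f(Y)$), but the calibration of the threshold is where the argument has a genuine gap, and it is never resolved. The paper thresholds \emph{multiplicatively} by a factor $1+\eps/n^2$, not additively by $\eps' f(Y)/n$. With the paper's threshold the number of swaps is bounded by solving $(1+\eps/n^2)^T \le n/2$, giving $T = O(\frac{n^2}{\eps}\log n) = \tilde O(n^2)$ iterations per guess, and $O(n^2)\ \text{guesses} \times \tilde O(n^2)\ \text{swaps} \times O(n^2)\ \text{per swap} = \tilde O(n^6)$. Your additive variant with $\eps'=1/n$ gives $n^3$ iterations and hence $\tilde O(n^7)$, which you notice, but your subsequent attempts to repair this are internally inconsistent: you write both that $f(S)$ ranges over $[f(Y), nf(Y)]$ ``giving $O(n\log n)$ iterations'' and that the iteration count is ``$O(\eps^{-1}\log n)$''; neither matches a $1+\eps/n^2$ (or even a $1+\eps/n$) threshold, and neither multiplies out to $\tilde O(n^6)$ with your stated per-iteration cost. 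The reason the exponent $2$ in $\eps/n^2$ is forced is that the error introduced per iteration of the analysis is $\frac{\eps}{n^2}f(S)$ \emph{per term} in a sum with up to $n$ terms (over $x\in\sopt\setminus S$), so the aggregate slack per comparison to $\sopt$ is $\frac{\eps}{n}f(S)$ — small enough to absorb — while still keeping the iteration count at $\tilde O(n^2)$. A coarser threshold of $1+\eps/n$ would push that slack to $\Theta(\eps)f(S)$, which is exactly what the paper's concluding remark acknowledges: it saves a factor $n$ in running time but degrades the ratio by $O(\eps)$.

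The second, smaller gap is in preserving the approximation ratio exactly. You say the $O(1/n)$ loss ``may be absorbed'' or fixed by integer-scaling, but the paper actually proves the ratio is unchanged, not just nearly so: in Case~2 the modified recursion gives $g(\sopt) \le (2+\eps n^{-1})g(\stt) + (B-c(Y))\rho_{t+1}$, which propagates through Wolsey's lemma into
\[
f(S^{t^*}) \ge \frac{f(Y)}{(2+\eps n^{-1})e^2} + \frac{1}{2+\eps n^{-1}}\left(1-\frac{1}{e^2}\right)f(\sopt),
\]
and the leading $f(Y)$ term, combined with $nf(Y) \ge 2f(\sopt)$, compensates for the $\eps n^{-1}$ degradation as long as $\eps \le \frac{4}{e^2-1}$. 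You would need to carry this compensating term explicitly rather than appeal to a generic ``$\eps$-perturbation'' of the analysis; without it the ratio really does lose $\Theta(\eps/n)$, which is fine for the theorem statement as literally written, but not for what the paper claims (``no harm to the approximation ratio''). Also note, as the paper does, that the threshold must apply only to real swaps and not to swaps with $\phi$; otherwise the $O(n)$ pure additions could stall below threshold and the argument that the knapsack budget is the only stopping obstruction in Case~2 breaks.
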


\begin{proof}
To guarantee polynomial running time, we modify the greedy swaps as follows.
Fix some $\eps >0$ (to be set later). Whenever an improvement is
encountered, a swap is performed only if the value of $f(\cdot)$ increases by at least a
factor of 1+$\frac {\eps}{n^2}$ as a result of the swap.
(We note that this modification applies only to ``real'' swaps, and not
to swaps with the dummy element $\phi$.)
Let $\sopt= \{u_1, \ldots , u_p \}$, then by Lemma \ref{lemma:f_on_Y_swaps},
\begin{eqnarray}
f(\sopt) & =& f(\{ u_1 \}) + \sum_{i=2}^p (f(\{ u_1 , \ldots , u_i \}) -f(\{u_1 , \ldots , u_{i-1} \})) \nonumber  \\
& \leq & f(Y)+(|{\sopt} | -2 ) \frac{f(Y)}{2} \leq \frac{n}{2} f(Y)
\label{eqn:sopt_vs_Y}
\end{eqnarray}
Hence, the overall number of swaps is bounded by $\frac {n^2}\epsilon \log n$.
We show next that we can find a constant $\eps>0$ that would not degrade the
approximation ratio of the algorithm.
Consider the two cases in the proof of Theorem~\ref{thm:amatk_appx_ratio}.
In Case 1, after the modification Inequality~(\ref{eq:LS}) becomes
\begin{align}
\label{eq:LSm}
f(S^*)-f(S) & \leq \sum_{x \in\sopt\setminus S} \left( f(S \cup \{ x \}) -f(S)\right) \nonumber \\
            & \leq \sum_{x \in\sopt\setminus S}  \left( f(S \cup \{x \} \setminus  \{b(x) \}) - f(S \setminus \{ b(x) \})\right) \nonumber \\
            & \leq \sum_{x \in\sopt\setminus S}  \left(\left( 1+\frac{\epsilon}{n^2}\right) f(S) - f(S \setminus \{ b (x) \}) \right)  \nonumber \\
            & \leq \left(1+\frac\epsilon n\right) f(S).
\end{align}
Thus, we get $\frac 1{2+\epsilon n^{-1}}$-approximation. Since the approximation factor in Case 2 is
$\frac{1}{2} (1 -e^{-2})$, the overall approximation remains unchanged, as long as
$\frac 1{2+\epsilon n^{-1}} \ge \frac{1}{2} (1 -e^{-2})$, which implies
$\frac\epsilon n \le \frac 2{e^2 -1}$.

Now, consider Case 2 of Theorem~\ref{thm:amatk_appx_ratio}.
After the modification, Inequality~(\ref{eq:partition}) becomes

\begin{align}
\label{eq:partitionm}
g({\sopt}) - g({\stt}) &
\leq  \displaystyle { \sum_{x \in\sopt\setminus S} }
(g({\stt} \cup \{ x \}) - g({\stt}))  \nonumber\\
& \leq \displaystyle {\sum_{x \in C}}
\left(\left(1+\frac {\epsilon}{n^2}\right) g({\stt}) - g({\stt} \setminus \{ b_t(x) \})\right)  \nonumber\\
&+ \displaystyle {\sum_{x \in\sopt\setminus S\setminus C}}
\left(g({\stt}) -  g({\stt} \setminus \{ b_t(x) \})\right)+(g({\stt} \setminus \{ b_t(x) \} \cup \{ x \})  - g({\stt})).
\end{align}
Note that for any $x\in \sopt\setminus S$, if $b_t(x)=\phi$ then
$\rho_{(x,b_t(x))} \le \rho_{t+1}$, otherwise, i.e. $b_t(x)\neq\phi$, either
$g({\stt} \setminus \{ b_t(x) \} \cup \{ x \}) \le \left(1+\frac {\epsilon}{n^2}\right) g({\stt})$
or $\rho_{(x,b_t(x))} \le \rho_{t+1}$.
The set $C$ contains all the elements for which
the first inequality is satisfied.
Inequality~(\ref{eq:partitionm}) implies modified Inequality~(\ref{eq:rho}):
\begin{align}
\label{eq:rhom}
g({\sopt})
& \leq  (2+\epsilon n^{-1}) g({\stt}) + (B - c(Y)) \rho_{t+1} .
\end{align}

Proceeding to propagate this modification, we finally get
\begin{align*}
    f({S^{t^*}})
    &\geq  f(Y) + \frac{1}{2+\epsilon n^{-1}} \left(1 -\frac 1{e^{2}}\right)g({\sopt}) - \frac{f(Y)}{2} \\
    &\geq \frac{f(Y)}{(2+\epsilon n^{-1})e^2} +\frac{1}{2+\epsilon n^{-1}} \left(1 -\frac 1{e^{2}}\right) f({\sopt})
\end{align*}
It follows that the approximation ratio remains $\frac 12 (1-e^{-2})$,
as long as $(\epsilon n^{-1})(1-e^{-2})f({\sopt}) \le 2f(Y)e^{-2}$.
By (\ref{eqn:sopt_vs_Y}),
we have $nf(Y) \ge 2f({\sopt})$; thus, this holds for $\epsilon \le \frac 4{e^2-1}$.

We now turn to analyze the running time. We have $O(n^2)$ guesses of the set $Y$. For each
such guess, we have overall $\tilde{O}(n^2)$ successful swaps.
(Note that there are $O(n)$
swaps in which $\phi$ is involved.)
We have overall $\tilde{O}(n^2)$ swaps, each
requires $O(n^2)$ time.
We conclude that the overall running time is $\tilde{O}(n^6)$.
\end{proof}

We remark that a factor of $n$ can be saved in the running time at the expense of reducing
the approximation ratio by $O(\eps)$. Specifically, we can perform the swaps only if the multiplicative
improvement is 1+$\frac {\eps}{n}$.

\section{$k$ Matroid Constraints}
\label{sec:k_matroids}
In this section we extend our algorithm for a single matroid constraint to handle a single knapsack constraint and $k$ matroid constraints,  for any fixed $k> 1$; namely, we give an approximation
 algorithm for problem (\ref{eq:k_matroid_problem}).

\subsection{Algorithm}
\label{sec:alg_mati_k}
A key operation in our algorithm for  $k$ matroid constraints is $k$-swap,
in which we add to the solution set a single element and eliminate up to $k$ elements.
Denote the collection of subsets of $U$ of size at most $k$ by $[U]^{\le k}$;
note that $\emptyset\in [U]^{\le k}$.
Given a subset $U' \subseteq U$, $x \in U \setminus U'$ and  $\by\in [U']^{\le k}$,
$(x,{\by})$ is a {\em k-swap} if $(U' \setminus \by) \cup \{ x \}$ is independent, i.e.,
$(U' \setminus \by) \cup \{ x \} \in \bigcap_{j=1}^k  {\cF}_j $.

Let $L_k(U')$ denote the collection of $k$-swaps.
The marginal profit density of a $k$-swap $(x,{\by})$ is given by
$\rho_{(x,{\by})} = \frac{f((U'  \setminus  {\by})  \cup \{ x \}) - f(U')}{c_x}$.

\begin{algorithm}[!h]
\caption{{\amatik}$(U, B, {\cM}_1, \ldots ,  {\cM}_k)$ }
    \label{alg:submod_kmi}
\begin{algorithmic}[1]
    \State Guess $u_1 = \argmax_{u\in\sopt} f(\{u\})$ and $u_2 =   \argmax_{u\in\sopt\setminus{\{u_1\}}} f(\{u, u_1\}) - f(\{u_1\})$.
    \State Let $Y= \{u_1, u_2 \}$.
    \State Initialize $S = Y$ and ${\addls} = true$
    \While{${\addls}$}
        \Statex\Comment{greedy $k$-swaps}
        \State $\addls = false$
        \State Generate the collection of $k$-swaps $L=L_k(S)$
        \While{($not({\addls})$ and $L \neq \emptyset$)}
            \State Pop (i.e., pick and remove) from $L$ a swap $(x,\by)$ with a maximum value of $\rho_{(x,\by)}$
            \If{ ($\by\cap Y =\emptyset$ and $\rho_{(x,{\by})} >0$ and  $c_x  - c({\by}) + c(S) \leq B$)}
                 \State $S= (S  \setminus {\by})  \cup \{x\}$
                 \State $c(S)= c(S)  - c({\by}) + c_x$
                 \State  ${\addls}= true$
            \EndIf
        \EndWhile
    \EndWhile
    \State Return $S$
\end{algorithmic}
\end{algorithm}

\myparagraph{Overview}  Similar to {\amatk}, Algorithm {\amatik} modifies the solution set iteratively,
while increasing the objective function value and maintaining the knapsack and matroid constraints.
The main change
is in replacing swaps with $k$-swaps.
A pseudocode of {\amatik} is given in Algorithm~\ref{alg:submod_kmi}.

\subsection{Analysis}

We first analyze the performance ratio of {\amatik}.
\begin{theorem}
\label{thm:amatik_appx_ratio}
For any fixed $k \geq 1$,
{\amatik} is a $\frac{1-e^{-(k+1)}}{k+1}$-approximation algorithm for problem (\ref{eq:k_matroid_problem}).
\end{theorem}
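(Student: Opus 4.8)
The plan is to mirror the proof of Theorem~\ref{thm:amatk_appx_ratio}, replacing every swap by a $k$-swap and tracking the extra factor of $k$ that this introduces. First I would establish the $k$-matroid analogue of Lemma~\ref{lemma: mapping_sopt_s}: given independent sets $S, T \in \bigcap_{j=1}^k \cF_j$, there is a mapping $b: T \setminus S \to [S \setminus T]^{\le k} \cup \{\phi\}$ such that $(S \setminus b(u)) \cup \{u\} \in \bigcap_{j=1}^k \cF_j$ for all $u \in T \setminus S$, and each $y \in S \setminus T$ appears in at most $k$ of the sets $b(u)$. This follows by applying Lemma~\ref{lemma: mapping_sopt_s} separately to each matroid $\cM_j$, obtaining maps $b_j$, and setting $b(u) = \{b_j(u) : j = 1, \dots, k\} \setminus \{\phi\}$; since each $b_j$ is at most $1$-to-$1$, the combined multiplicity bound is $k$. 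The set $b(u)$ has size at most $k$, and $(S \setminus b(u)) \cup \{u\}$ is independent in each $\cM_j$ because $(S \setminus b_j(u)) \cup \{u\}$ is, and removing additional elements preserves independence by the hereditary property.

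Next I would run the same two-case analysis. In Case~1 (the last iteration rejected every $k$-swap $(x, b(x))$ because $\rho \le 0$, not because of the knapsack), inequalities analogous to~(\ref{eq:LS}) give $f(\sopt) - f(S) \le \sum_{x \in \sopt \setminus S} (f(S) - f(S \setminus b(x)))$, and then the multiplicity-$k$ bound replaces Claim~\ref{claim:bound_R1_plus_R2}'s telescoping argument: ordering $\{b(x)\}_{x \in \sopt \setminus S}$ as a multiset in which each element repeats at most $k$ times, submodularity yields $\sum_x (f(S) - f(S \setminus b(x))) \le k\, f(S)$, hence $f(S) \ge \frac{1}{k+1} f(\sopt)$. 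In Case~2, I would replace $2g(\stt)$ everywhere by $(k+1) g(\stt)$: Claim~\ref{claim:bound_R1_plus_R2} becomes $\sum_{x \in \sopt \setminus \stt}(g(\stt) - g(\stt \setminus b_t(x))) \le k\, g(\stt)$ by the same multiplicity argument, so~(\ref{eq:rho}) reads $g(\sopt) \le (k+1) g(\stt) + (B - c(Y))\rho_{t+1}$, and~(\ref{eq:sopt_vs_stt_greedy}) becomes $g(\sopt) \le (k+1)(g(\stt) + \frac{B-c(Y)}{k+1}\rho_{t+1})$.

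Then Lemma~\ref{lemma:wolsey_ineq} is applied with $D = B''/(k+1)$ in place of $B''/2$, giving $g(S^{t^*} \text{-after-swap}) / g(\sopt) \ge \frac{1}{k+1}(1 - (1 - \frac{k+1}{B''})^{B'}) \ge \frac{1}{k+1}(1 - e^{-(k+1)B'/B''}) \ge \frac{1}{k+1}(1 - e^{-(k+1)})$ since $B' > B''$. The final chain of equalities closing the proof also needs the $k$-swap version of Lemma~\ref{lemma:f_on_Y_swaps}, namely that for any $k$-swap $(u_\ell, \by)$ with $\by \subseteq W \cup \{\phi\}$ and $u_1, u_2, u_\ell \notin W$, one has $f((Y \cup W \setminus \by) \cup \{u_\ell\}) - f(Y \cup W) \le f(Y)/2$; the original proof only used monotonicity to drop the removed element $w$, and the same argument (first bound by $f((Y \cup W \setminus \by) \cup \{u_\ell\}) - f(Y \cup W \setminus \by)$, then by $f(\{u_\ell\})$ and by $f(\{u_1, u_\ell\}) - f(\{u_1\})$) works verbatim with $\by$ in place of $\{w\}$. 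With $f(S^{t^*}) = f(Y) + g(S^{t^*}) \ge f(Y) + \frac{1}{k+1}(1 - e^{-(k+1)})g(\sopt) - \frac{f(Y)}{2}$ and $g(\sopt) = f(\sopt) - f(Y)$, rearranging gives $f(S^{t^*}) \ge \frac{f(Y)}{2} - \frac{f(Y)}{k+1}(1 - e^{-(k+1)}) + \frac{1}{k+1}(1-e^{-(k+1)})f(\sopt) \ge \frac{1}{k+1}(1 - e^{-(k+1)})f(\sopt)$, where the last step uses $\frac{1}{2} \ge \frac{1}{k+1}(1-e^{-(k+1)})$ for $k \ge 1$. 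I expect the main obstacle to be the generalized mapping lemma --- specifically, verifying that combining the per-matroid maps indeed yields simultaneous independence in all $k$ matroids while keeping each $b(u)$ of size $\le k$ and each element of $S \setminus T$ hit at most $k$ times; everything downstream is a mechanical substitution of $k+1$ for $2$.
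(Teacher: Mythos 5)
Your proposal is correct and follows essentially the same route as the paper: you reproduce the paper's Lemma~\ref{lemma: partition_k_matroid} (combining the per-matroid maps from Lemma~\ref{lemma: mapping_sopt_s} and invoking the hereditary property for simultaneous independence), the multiplicity-$k$ telescoping bound that is Claim~\ref{claim:kmat_local_search}, the substitution of $(k+1)$ for $2$ in the Wolsey-style estimate, and the $k$-swap analogue of Lemma~\ref{lemma:f_on_Y_swaps}, which is exactly the paper's Lemma~\ref{lemma:f_on_Y_k_swaps}. The only place you are slightly less explicit than the paper is the telescoping in the multiplicity argument (the paper fixes a global ordering of $S$ and charges each removed element to its marginal in that ordering, then uses that each element appears in at most $k$ of the sets $b(x)$), but the idea you describe is the same.
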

We use in the proof the next lemmas.

\begin{lemma}
\label{lemma: partition_k_matroid}
Given two independent sets $S,T \in  \bigcap_{j=1}^k  {\cF}_j $,
there exists a mapping of the elements in $T \setminus S$ to $[S\setminus T]^{\le k}$
(namely, subsets of $S\setminus T$ of size at most $k$, including the empty set),
such that
each element $u \in S\setminus T$ appears in at most $k$ subsets.
\end{lemma}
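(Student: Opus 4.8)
The plan is to generalize Lemma~\ref{lemma: mapping_sopt_s} from a single matroid to the intersection of $k$ matroids, applying the single-matroid argument to each matroid $\cM_j$ separately and then combining the $k$ resulting mappings. First I would, for each $j = 1, \ldots, k$, invoke Lemma~\ref{lemma: mapping_sopt_s} with the two independent sets $S, T \in \cF_j$ to obtain a mapping $b_j : T \setminus S \rightarrow (S \setminus T) \cup \{\phi\}$ such that $(S \setminus \{b_j(u)\}) \cup \{u\} \in \cF_j$ for every $u \in T \setminus S$, and $|b_j^{-1}(y)| \le 1$ for every $y \in S \setminus T$. I would then define the desired mapping by sending each $u \in T \setminus S$ to the subset $\by(u) = \{ b_j(u) : j = 1, \ldots, k,\ b_j(u) \neq \phi \} \subseteq S \setminus T$, which has size at most $k$ (and is allowed to be empty, so $\by(u) \in [S\setminus T]^{\le k}$ as required).

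Next I would verify the two properties claimed in the statement. For the independence property, note that $(S \setminus \by(u)) \cup \{u\}$ is obtained from $S$ by removing a set containing $b_j(u)$ (when $b_j(u) \neq \phi$) and adding $u$; since $(S \setminus \{b_j(u)\}) \cup \{u\} \in \cF_j$ and the hereditary property of matroid $\cM_j$ ensures that deleting additional elements keeps a set independent, we get $(S \setminus \by(u)) \cup \{u\} \in \cF_j$ for each $j$ — wait, this needs a small care: deleting elements from an independent set keeps it independent, but here we are deleting $\by(u) \setminus \{b_j(u)\}$ from the independent set $(S \setminus \{b_j(u)\}) \cup \{u\}$, which is fine by heredity. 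Hence $(S \setminus \by(u)) \cup \{u\} \in \bigcap_{j=1}^k \cF_j$. For the multiplicity bound, fix $y \in S \setminus T$; the number of subsets $\by(u)$ containing $y$ is at most $\sum_{j=1}^k |\{ u \in T \setminus S : b_j(u) = y \}| = \sum_{j=1}^k |b_j^{-1}(y)| \le \sum_{j=1}^k 1 = k$, using the per-matroid bound $|b_j^{-1}(y)| \le 1$.

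The main (mild) obstacle is purely one of careful bookkeeping: making sure that the heredity step is stated correctly (we remove, not add, elements relative to the single-matroid witness set), and that the statement's phrasing — ``each element of $S\setminus T$ appears in at most $k$ subsets'' — is matched exactly by the union-of-preimages count. There is no genuine difficulty beyond assembling the $k$ independent single-matroid mappings, since Lemma~\ref{lemma: mapping_sopt_s} already does all the real work for one matroid and the intersection structure only forces us to take unions of the $\phi$-free images. I would close by remarking that when $k = 1$ this recovers Lemma~\ref{lemma: mapping_sopt_s} exactly.
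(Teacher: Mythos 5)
Your proposal is correct and follows essentially the same route as the paper: invoke Lemma~\ref{lemma: mapping_sopt_s} once per matroid to get the single-matroid maps $b_j$, define the combined map by unioning the non-$\phi$ images, and bound the multiplicity of each $y \in S\setminus T$ by $\sum_j |b_j^{-1}(y)| \le k$. The only difference is cosmetic: you spell out the hereditary-property step showing $(S \setminus \by(u)) \cup \{u\} \in \cF_j$, which the paper leaves implicit (``The $k$-swaps are defined accordingly'').
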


\begin{proof}
For a matroid ${\cM}_j$, $1 \leq j \leq k$, since $S,T \in {\cF}_j$ are independent sets,
by Lemma~\ref{lemma: mapping_sopt_s} there exists
a mapping $b_j: T \setminus S \rightarrow (S \setminus T) \cup \{ \phi \}$,
such that $(S \setminus b_j(x)) \cup \{ x \} \in {\cF}_j$, for all
$x \in T \setminus S$.
We use the mappings $b_j$ to define the mapping
$b: T \setminus S \rightarrow [S\setminus T]^{\le k}$ as follows.
For each element $x \in T \setminus S$, define $b(x) = \bigcup_{j=1}^k \{b_j(x)\}\cap(S\setminus T)$.
The $k$-swaps are defined accordingly.
By  Lemma~\ref{lemma: mapping_sopt_s}, for each element $u\in S\setminus T$,
and for $1 \leq j \leq k$, $|b_j^{-1}(u)|\le 1$. It follows that
each element $u \in S\setminus T$ appears in at most $k$ mapped subsets.
\end{proof}

The proof of the next lemma is similar to the proof of Lemma~\ref{lemma:f_on_Y_swaps} (details omitted).

\begin{lemma}
\label{lemma:f_on_Y_k_swaps}
Given an element $u_\ell \in {\sopt}$,  $\ell \geq 3$, and
a subset $W \subseteq U \setminus \{ u_1, u_2, u_\ell\}$, for any $k$-swap  $(u_\ell , {\bar w})$, where
${\bar w} \in [W]^{\le k}$,
$
f((Y \cup W  \setminus {\bar w})  \cup \{ u_\ell \} ) - f(Y \cup W) \leq f(Y)/2.
$
\end{lemma}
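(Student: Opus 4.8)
The plan is to mirror exactly the proof of Lemma~\ref{lemma:f_on_Y_swaps}, replacing the single-element deletion $w$ by the subset ${\bar w}\in [W]^{\le k}$. The only properties of $w$ used in the original argument are that $w\notin\{u_1,u_2,u_\ell\}$ (so that $Y\cup W$ contains $\{u_1\}$ even after removing $w$, and $u_\ell$ is not already present) and that $f$ is monotone, submodular, and non-negative; all of these carry over verbatim when $w$ is replaced by a subset ${\bar w}\subseteq W\subseteq U\setminus\{u_1,u_2,u_\ell\}$.

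First I would record the two chains of inequalities. For the first chain: by submodularity, dropping elements from a set only increases the marginal value of adding $u_\ell$, so
\[
f\bigl((Y\cup W\setminus{\bar w})\cup\{u_\ell\}\bigr)-f(Y\cup W)\le f\bigl((Y\cup W\setminus{\bar w})\cup\{u_\ell\}\bigr)-f\bigl(Y\cup W\setminus{\bar w}\bigr)\le f(\{u_\ell\})-f(\emptyset)\le f(\{u_1\}),
\]
where the middle step uses submodularity (the marginal of $u_\ell$ over the set $Y\cup W\setminus{\bar w}$ is at most its marginal over $\emptyset$) and the last step uses the ordering of ${\sopt}$, which guarantees $u_1$ has the largest singleton value. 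For the second chain, again using submodularity to bound the marginal of $u_\ell$ over $Y\cup W\setminus{\bar w}$ by its marginal over $\{u_1\}$, and then using that $u_2$ was chosen to maximize $f(\{u,u_1\})-f(\{u_1\})$ over ${\sopt}\setminus\{u_1\}$, while $\ell\ge 3$ means $u_\ell\in{\sopt}\setminus\{u_1\}$:
\[
f\bigl((Y\cup W\setminus{\bar w})\cup\{u_\ell\}\bigr)-f(Y\cup W)\le f(\{u_1,u_\ell\})-f(\{u_1\})\le f(\{u_1,u_2\})-f(\{u_1\})=f(Y)-f(\{u_1\}).
\]
Then I would add the two inequalities, obtaining $2\bigl(f((Y\cup W\setminus{\bar w})\cup\{u_\ell\})-f(Y\cup W)\bigr)\le f(\{u_1\})+f(Y)-f(\{u_1\})=f(Y)$, which gives the claimed bound $f(Y)/2$.

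I do not anticipate any genuine obstacle here, since the structure of the $k$-swap deletion set ${\bar w}$ is irrelevant to this particular estimate — the lemma is purely about the function value of adding $u_\ell$, and removing more elements only helps via monotonicity/submodularity (formally, $f(Y\cup W)\ge f(Y\cup W\setminus{\bar w})$ is not even needed; what is used is that the marginal of $u_\ell$ over the smaller set $Y\cup W\setminus{\bar w}$ dominates its marginal over subsets of it). The one point to be careful about is that $Y\cup W\setminus{\bar w}$ still contains $u_1$: this holds because $u_1\in Y$ and ${\bar w}\subseteq W\subseteq U\setminus\{u_1,u_2,u_\ell\}$, so ${\bar w}$ is disjoint from $Y$. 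With that observation in hand, the two-chain-and-sum argument goes through identically to the $k=1$ case, which is exactly why the paper states the details are omitted.
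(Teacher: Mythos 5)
Your proof is correct and is precisely the argument the paper intends when it writes that the proof of Lemma~\ref{lemma:f_on_Y_k_swaps} is similar to that of Lemma~\ref{lemma:f_on_Y_swaps} with details omitted: the two chains of inequalities and their sum carry over verbatim once you observe that ${\bar w}\subseteq W$ is disjoint from $Y=\{u_1,u_2\}$, so $Y\subseteq Y\cup W\setminus{\bar w}$. One small slip in your closing aside: the first step $f((Y\cup W\setminus{\bar w})\cup\{u_\ell\})-f(Y\cup W)\le f((Y\cup W\setminus{\bar w})\cup\{u_\ell\})-f(Y\cup W\setminus{\bar w})$ does in fact use monotonicity via $f(Y\cup W)\ge f(Y\cup W\setminus{\bar w})$, so that inequality \emph{is} needed.
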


\begin{proof}[Proof of Theorem \ref{thm:amatik_appx_ratio}]
Suppose we have guessed
$u_1 = \argmax_{u\in\sopt} f(\{u\})$ and
\\
$u_2 = \argmax_{u\in\sopt\setminus{\{u_1\}}} f(\{u, u_1\}) - f(\{u_1\})$ correctly.
Let $S\subseteq U$ denote the subset output by the algorithm.
By Lemma~\ref{lemma: partition_k_matroid}, there exists
a mapping $b: {\sopt}\setminus S \rightarrow [S\setminus\sopt]^{\le k}$,
such that $(S \setminus b(u)) \cup \{ u \} \in {\cF}$, for all $u \in {\sopt}\setminus S$.

We distinguish between two cases, based on the last iteration of the algorithm.
\par\noindent{\bf Case 1:}
For all elements $x\in \sopt\setminus S$
the $k$-swap $(x,b(x))$ was rejected
because it satisfied $\rho_{(x,b(x))} \le 0$
(and not because $c_x  - c(b(x)) + c(S) > B$.
That is, the swap $(x,b(x))$ did not violate the knapsack constraint.

By Case 1 assumption, it follows that no $k$-swap $(x,b(x))$, for
an element $x \in \sopt\setminus S$, could increase the value of $f(\cdot)$.
Hence, similar to the single matroid case, we have
\begin{align*}
f(\sopt)-f(S) & \leq \sum_{x \in\sopt\setminus S} \left( f(S \cup \{ x \}) -f(S)\right) \nonumber \\
            & \leq \sum_{x \in\sopt\setminus S}  \left( f(S \cup \{x \} \setminus  b(x) ) - f(S \setminus  b(x) )\right) \nonumber \\
            & \leq \sum_{x \in\sopt\setminus S}  \left( f(S) - f(S \setminus b (x)) \right) .
\end{align*}

In the next claim, we show that
$\sum_{x \in\sopt\setminus S}  \left( f(S) - f(S \setminus  b (x) ) \right) \le k\cdot f(S)$.
Thus, in this case, we have that {\amatik} yields a $\frac 1{k+1}$-approximation to the optimum.
\begin{claim}
\label{claim:kmat_local_search}
    Let $b:\sopt\setminus S \rightarrow [S\setminus\sopt]^{\le k}$ be the mapping defined in
    Lemma~\ref{lemma: partition_k_matroid}. Then
$\sum_{x \in\sopt\setminus S}  \left( f(S) - f(S \setminus b (x) ) \right) \le k\cdot f(S)$.
\end{claim}

\begin{proof}
Let $S = \{u_1,\ldots,u_{|S|}\}$, where the indices are assigned arbitrarily.
Fix an element $x \in \sopt\setminus S$, for which $b(x) \ne \emptyset$.
Let $b(x) = \{u_{i_1},\dots,u_{i_\ell}\}$, where $i_1 <\cdots< i_\ell$, and $\ell\le k$,
and let $b^j(x) = \{u_{i_j},\dots,u_{i_\ell}\}$, for $1 \le j \le \ell$,
and $b^{\ell+1}(x) =\emptyset$.
We have
\begin{align}
\label{eq:kmat_local_search}
    f(S) - f(S \setminus b (x))
     & =    \sum_{j=1}^\ell \left( f(S\setminus b^{j+1}(x)) -f(S\setminus b^j(x)) \right) \nonumber \\
     & \leq \sum_{j=1}^\ell \left( f(\{u_1,\ldots,u_{i_j} \}) - f(\{u_1,\ldots,u_{i_j -1} \}) \right)
\end{align}
The last inequality follows from submodularity, since for $1\le j \le \ell$,
\[
f(S\setminus b^{j+1}(x)) -f(S\setminus b^j(x)) \leq f(\{u_1,\ldots,u_{i_j} \}) - f(\{u_1,\ldots,u_{i_j -1}\}).
\]
As each element in $S\setminus\sopt$ appears in at most $k$ subsets $b(x)$, for
$x \in \sopt\setminus S$, summing Inequality~(\ref{eq:kmat_local_search}) over all elements
in $\sopt\setminus S$ we get
\[
    \sum_{x \in\sopt\setminus S}  \left( f(S) - f(S \setminus b (x) ) \right)
\le k\cdot\sum_{j=1}^{|S|}  \left( f(u_1,\ldots, u_j) - f(u_1,\ldots, u_{j-1})\right)
\le k\cdot f(S).
\]
\end{proof}

\par\noindent{\bf Case 2:}
At least one $k$-swap $(x,b(x))$, for $x\in \sopt\setminus S$, considered in the
last iteration,
violated the knapsack constraint; namely, satisfied
$c_x  - c(b(x)) + c(S) > B$.

As in the single matroid case,
denote by $S^\ell$ the subset of elements in the solution after iteration $\ell$
of the algorithm, for $\ell \geq 0$, where $S^0=Y$.
Let $b_\ell: \sopt\setminus S^\ell \rightarrow [S^\ell\setminus\sopt]^{\le k}$
be the mapping guaranteed by
Lemma~\ref{lemma: partition_k_matroid}.
Recall that
in all iterations $t$ except the last, the algorithm performs a swap
$(x_t,\by_t)$, where $\by_t=b(x_t)$.
Denote $\rho_{(x_t,\by_t)}$ by $\rho_t$.

Let $t^*+1$ be the first iteration in which
a swap $(u_{t^*+1},b_{t^*}(u_{t^*+1}))$, for $u_{t^*+1}\in \sopt\setminus S$,
was considered
and rejected, since it
violated the knapsack constraint.
By the choice of $t^*$, for all $t=0, \ldots , t^*-1$,
for every $x\in \sopt\setminus\stt$ such that $(x,b_t(x))$ is a swap, we must have
$\rho_{(x,b_t(x))} \le \rho_{t+1}$.

As in the single matroid case, define the function $g(S)=f(S)-f(Y)$.
\remove{
Since $f(\cdot)$ is submodular and non-decreasing, so is $g(\cdot)$.
Hence, for any $R,T \subseteq U$, it holds that
\begin{equation}
\label{eq:g_marginal_profits}
g(T) \leq g(R) + \sum_{u\in T \setminus R} (g(R \cup \{u\}) - g(R))
\end{equation}

Using (\ref{eq:g_marginal_profits}), for all $t=0, \ldots , t^*$,
$$
g({\sopt}) \leq  g({\stt}) + \sum_{u \in {{\sopt} \setminus {\stt}}} ( g({\stt} \cup \{ u \}) -g({\stt})).
$$

By Lemma~\ref{lemma: mapping_sopt_s}, we can partition the elements in ${\sopt} \setminus {\stt}$
to two disjoint sets. Let
$b(x)$ be the element to which $x \in {\sopt} \setminus {\stt}$ is mapped.

Partition the elements $x \in {\sopt} \setminus {\stt}$ into two sets according to $b_t(x)$.
We say that $x$ is a {\em non-conflicting} element if $b_t(x) = \emptyset$; else, $x$ is {\em conflicting}.
Let ${\sopt} \setminus {\stt} =A\cup C$,
where $A$ is the set of non-conflicting elements
and $C$ is the set of conflicting elements.
We can now write:
\begin{align}
\label{eq:kmat_partition}
g({\sopt}) - g({\stt})
&\leq\displaystyle {\sum_{x\in{A \cup C}}} \left(g({\stt} \cup \{ x \}) - g({\stt})\right)  \nonumber\\
&\leq\displaystyle {\sum_{x \in A}} \left(g({\stt} \cup \{ x \}) - g({\stt})\right)  \nonumber\\
& +  \displaystyle {\sum_{x \in C}} \left(g(({\stt} \setminus b_t(x))\cup \{ x \}) - g({\stt}\setminus b_t(x))\right) \nonumber\\
&=\displaystyle {\sum_{x \in A}} \left(g({\stt} \cup \{ x \}) - g({\stt})\right)  \nonumber\\
&  + \displaystyle {\sum_{x \in C}} \left(g({\stt}) - g({\stt} \setminus b_t(x))+
                                          g(({\stt} \setminus b_t(x))\cup \{ x \}) - g({\stt}) \right)
\end{align}
}
Using submodularity, we can now write:
\begin{align}
\label{eq:partition}
g({\sopt}) - g({\stt})
& \leq \displaystyle {\sum_{x \in {{\sopt} \setminus {\stt}}}}
\left(g({\stt} \cup \{ x \}) - g({\stt})\right)  \nonumber\\
& \leq \displaystyle {\sum_{x \in {{\sopt} \setminus {\stt}}}}
\left(g(({\stt} \setminus \{ b_t(x) \})  \cup \{ x \}) - g({\stt} \setminus \{ b_t(x) \}) \right)  \nonumber\\
& \leq \displaystyle {\sum_{x \in {{\sopt} \setminus {\stt}}}}
\left(g({\stt}) -  g({\stt} \setminus \{ b_t(x) \}) \right)+
\left(g(({\stt} \setminus \{ b_t(x) \}) \cup \{ x \})  - g({\stt}) \right)
\end{align}

Following the proofs of claims~\ref{claim:bound_R1_plus_R2} and~\ref{claim:kmat_local_search}, we have
$\displaystyle {\sum_{x \in {\sopt} \setminus {\stt}}}
(g({\stt}) - g({\stt} \setminus b_t(x))) \leq k\cdot g({\stt})$.
Thus, for any $t=0, \ldots , t^*-1$,

\begin{align}
\label{eq:kmat_rho}
g({\sopt}) & \leq (k+1)g({\stt})
+ \sum_{x \in {\sopt} \setminus {\stt}}
\left(g(({\stt} \setminus  b_t(x)) \cup \{ x \})  - g({\stt}) \right) \nonumber\\
& \leq  (k+1)g({\stt}) + (B - c(Y)) \rho_{t+1}.
\end{align}

It follows that
\begin{equation}
\label{eq:kmat_sopt_vs_stt_greedy}
g({\sopt}) \leq (k+1)  \left ( g({\stt}) +\frac{B - c(Y)}{k+1} {\rtto} \right  )
\end{equation}
A derivation similar to the one for the single matroid case,
combined with Lemma~\ref{lemma:wolsey_ineq} and~(\ref{eq:kmat_sopt_vs_stt_greedy}), yields
\begin{align*}
    \displaystyle{\frac{g((S^{t^*} \setminus b_{t^*}(u_{t^*+1}))  \cup \{ u_{t^*+1} \})}{g({\sopt})} }
    & \geq \displaystyle{\frac{\sum_{j=1}^{B'} \gamma_j}{(k+1) \left[  \min_{s=1, \ldots , B'} \sum_{j=1}^{s-1}
 \gamma_j + \frac{B''}{(k+1)} \gamma_s\right ]}} \\
& \geq \frac{1}{k+1} \left(  1-(1- \frac{k+1}{B''})^{B'} \right ) \\ \\
& \geq \frac{1}{k+1} \left(  1 - e^{- \frac{(k+1)B'}{B''}} \right) \geq \frac{1}{k+1} (1 - e^{-(k+1)})
\end{align*}
Finally, using Lemma \ref{lemma:f_on_Y_k_swaps},

\begin{align*}
    f({S^{t^*}}) &= f(Y) + g(S^{t^*}) \\
                 &= f(Y) + g(({\stts} \setminus  b_{t^*}(u_{t^*+1})) \cup \{ u_{t^*+1} \}) \\
                 &\phantom{= f(Y)}
                 -\left(g(({\stts}\setminus b_{t^*}(u_{t^*+1}))\cup\{ u_{t^*+1} \}) - g({\stts})\right) \\
                 &= f(Y) + g(({\stts} \setminus   b_{t^*}(u_{t^*+1})) \cup \{ u_{t^*+1} \}) \\
                 &\phantom{= f(Y)}
                 -\left(f(({\stts}\setminus b_{t^*}(u_{t^*+1}))\cup\{ u_{t^*+1} \}) - f({\stts})\right) \\
                 &\geq  f(Y) + \frac{1}{k+1} (1 -e^{-(k+1)})g({\sopt}) - \frac{f(Y)}2
                 \ge \frac{1}{k+1} (1 -e^{-(k+1)}) f({\sopt})
\end{align*}

\end{proof}

\noindent
Using analogous techniques, as in the case of a single matroid constraint, we can modify the algorithm to obtain the following theorem.
\begin{theorem}
\label{thm:multi_matroid_polytime}
Algorithm {\amatik} can be modified to run in $\tilde{O}(n^{k+5})$ time.
\end{theorem}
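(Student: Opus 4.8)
The plan is to follow the proof of Theorem~\ref{thm:single_matroid_polytime} almost verbatim, with ``swap'' replaced by ``$k$-swap'' throughout. First I would modify Algorithm~\ref{alg:submod_kmi} so that a $k$-swap $(x,\by)$ with $\by\neq\emptyset$ is performed only when it increases the value of $f(\cdot)$ by a factor of at least $1+\frac{\eps}{n^2}$, where $\eps>0$ is a constant fixed at the end of the argument; $k$-swaps with $\by=\emptyset$ (pure insertions) are left unrestricted, exactly as $\phi$-swaps were in the single-matroid analysis.

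Next I would bound the number of iterations of the outer loop. By Lemma~\ref{lemma:f_on_Y_k_swaps}, the same computation as in~(\ref{eqn:sopt_vs_Y}) gives $f(\sopt)\le\frac n2 f(Y)$, so the value of the maintained solution always lies in $[f(Y),\frac n2 f(Y)]$; hence the number of $k$-swaps with $\by\neq\emptyset$ performed by the modified algorithm is $O(\frac{n^2}{\eps}\log n)$. A $k$-swap with $\by=\emptyset$ increases $|S|$ by one, whereas a $k$-swap with $\by\neq\emptyset$ decreases $|S|$ by at most $k-1$; since $|S|\le n$ throughout, the number of $\by=\emptyset$ $k$-swaps is at most $n+(k-1)\cdot O(\frac{n^2}{\eps}\log n)$, which for fixed $k$ is $\tilde O(n^2)$. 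Thus, for each of the $O(n^2)$ guesses of $Y$, the algorithm performs $\tilde O(n^2)$ iterations.

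Then I would check that the multiplicative slack does not degrade the approximation ratio, revisiting the two cases of the proof of Theorem~\ref{thm:amatik_appx_ratio}. In Case~1 the bound of Claim~\ref{claim:kmat_local_search} becomes $\sum_{x\in\sopt\setminus S}\big((1+\tfrac{\eps}{n^2})f(S)-f(S\setminus b(x))\big)\le\big(k+\tfrac{\eps}{n}\big)f(S)$, giving a $\frac{1}{k+1+\eps/n}$-approximation, which is at least $\frac{1-e^{-(k+1)}}{k+1}$ whenever $\frac{\eps}{n}\le\frac{k+1}{e^{k+1}-1}$. In Case~2, Inequality~(\ref{eq:kmat_rho}) becomes $g(\sopt)\le(k+1+\eps n^{-1})\,g(\stt)+(B-c(Y))\rho_{t+1}$; feeding this through Lemma~\ref{lemma:wolsey_ineq} (with $D=\frac{B-c(Y)}{k+1+\eps n^{-1}}$ and using $B'>B''$) yields $g\big((S^{t^*}\setminus b_{t^*}(u_{t^*+1}))\cup\{u_{t^*+1}\}\big)\ge\frac{1}{k+1+\eps n^{-1}}(1-e^{-(k+1)})\,g(\sopt)$, and combining this with Lemma~\ref{lemma:f_on_Y_k_swaps} and the bound $nf(Y)\ge 2f(\sopt)$ from~(\ref{eqn:sopt_vs_Y}) gives $f(S^{t^*})\ge\frac{1-e^{-(k+1)}}{k+1}f(\sopt)$, provided $\eps$ is a small enough constant (the admissible range of $\eps$ depends only on the fixed $k$). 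Picking such an $\eps$ settles the approximation guarantee.

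Finally, for the running time: there are $O(n^2)$ guesses of $Y$; for each, $\tilde O(n^2)$ iterations as above; and each iteration enumerates the candidate $k$-swaps, i.e., all pairs $(x,\by)$ with $x\in U\setminus S$ and $\by\subseteq S$, $|\by|\le k$, of which there are $O(n^{k+1})$, each evaluated with $O(1)$ oracle queries to $f$ and $O(k)=O(1)$ independence tests. This gives the overall bound $\tilde O(n^2\cdot n^2\cdot n^{k+1})=\tilde O(n^{k+5})$. I expect the main obstacle to be the bookkeeping in Case~2 needed to confirm that a single constant $\eps$ works after the modification; a secondary subtlety is the slightly more careful counting of $\by=\emptyset$ $k$-swaps, since, unlike ordinary swaps, $k$-swaps can shrink the solution set.
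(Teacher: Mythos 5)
Your proof follows the same route as the paper's sketch: threshold the non-trivial $k$-swaps at a multiplicative $1+\eps/n^2$ improvement, bound the number of iterations via Lemma~\ref{lemma:f_on_Y_k_swaps} and (\ref{eqn:sopt_vs_Y}), re-run the two cases of Theorem~\ref{thm:amatik_appx_ratio} with the slack, and multiply $O(n^2)$ guesses $\times$ $\tilde O(n^2)$ iterations $\times$ $O(n^{k+1})$ candidate $k$-swaps per iteration. One point where you are actually more careful than the paper: the paper's sketch asserts that there are $O(n)$ $k$-swaps with $\by=\emptyset$, which is the right count only for $k=1$ (each ordinary swap preserves $|S|$, so insertions are bounded by $n$); for $k>1$ a non-empty $k$-swap can shrink $|S|$ by up to $k-1$, so, as you observe, the correct bound on $\emptyset$-swaps is $n+(k-1)\cdot\tilde O(n^2)=\tilde O(n^2)$, which still gives the stated running time. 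Your Case~1 and Case~2 bookkeeping is consistent with the single-matroid version and the choice of a constant $\eps$ (depending only on $k$) goes through as you describe.
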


\begin{proof}[Proof (sketch)]
Fix some $\eps >0$ (to be set later), and
perform a $k$-swap
only if the value of $f(\cdot)$ increases at least by a
factor of 1+$\frac {\eps}{n^2}$ as a result of the swap.
As before, the overall number of swaps is bounded by $\frac {n^2}\epsilon \log n$,
and it can be shown that for
$\eps \le \frac{2k+2}{e^2-1}$, the approximation ratio remains the same.

We now turn to analyze the running time. We have $O(n^2)$ guesses of the set $Y$. For each
such guess, we have overall $\tilde{O}(n^2)$ successful $k$-swaps
(Note that there are $O(n)$
swaps in which $\emptyset$ is involved.)
This implies, as before, an overall of $\tilde{O}(n^2)$
$k$-swaps.
Each $k$-swap requires $O(n^{k+1})$ time.
We conclude that the overall running time is $\tilde{O}(n^{k+5})$.
\end{proof}

\myparagraph{Acknowledgments}
We thank Chandra Chekuri and Moran Feldman for helpful comments on an earlier version of the paper.

\bibliography{bgap}

\end{document}